\renewcommand{\d}{{\rm d}}
\newcommand{\e}{{\rm e}}
\begin{document}

\title{Dynamical behavior and bifurcation analysis for a theoretical model of dengue fever transmission with  incubation period and delayed recovery
}

\titlerunning{Dynamical behavior and bifurcation analysis for a theoretical model}

\author{Burcu G\"{u}rb\"{u}z{$^{*, 1,2}$} \and Ayt\"{u}l G\"{o}k\c{c}e{$^{3}$} \and Segun I. Oke{$^{4}$} \and  Michael O. Adeniyi{$^{5}$} \and  Mayowa M. Ojo{$^{6}$}
}

\institute{
B. G\"{u}rb\"{u}z{$^*$} (Corresponding author)\at
{$^{1}$}Institute of Mathematics, Johannes Gutenberg-University Mainz, 55128 Mainz, Germany\\
{$^{2}$Institute for Quantitative and Computational Biosciences (IQCB), Johannes Gutenberg University Mainz, 55128 Mainz, Germany}\\
              \email{burcu.gurbuz@uni-mainz.de}
\and
A. G\"{o}k\c{c}e \at
{$^{3}$}Department of Mathematics, Ordu University, 52200 Ordu, Turkey.\\
              \email{aytulgokce@odu.edu.tr}           
\and
S. I. Oke \at
{$^{4}$}Department of Mathematics, Ohio University, Athens, 45701 OH, USA.\\
	       \email{segunoke2016@gmail.com}
\and
M. O. Adeniyi \at
{$^{5}$}Mathematical Sciences Department, Lagos State University of Science and Technology,  Ikorodu, Lagos, Nigeria.\\
	       \email{michaeladeniyi1976@gmail.com}
\and
M. M.Ojo \at
{$^{6}$}Department of Mathematical Sciences, University of South Africa, South Africa.\\
            \email{mmojomth@gmail.com}
}
\date{Received: date / Accepted: date}

\maketitle

\begin{abstract}
As offered by the World Health Organisation (WHO), close to half of the population in the world’s resides in dengue-risk zones. Dengue viruses are transmitted to individuals by Aedes mosquito species infected bite (Ae. Albopictus of Ae. aegypti). These mosquitoes can transmit other viruses, including Zika and Chikungunya. In this research, a mathematical model is formulated to reflect different time delays considered in both extrinsic and intrinsic incubation periods, as well as in the recovery periods of infectious individuals. Preliminary results for the non-delayed model  including positivity and boundedness of solutions, non-dimensionalization and equalibria analysis are presented. The threshold parameter (reproduction number) of the model is obtained via next generation matrix schemes. The stability analysis of the model  revealed that various dynamical behaviour can be observed depending on delay parameters, where in particular  the effect of delay in the recovery time of infectious individuals may lead to substantial changes in the dynamics. The ideas presented in this paper can be applied to other infectious diseases, providing qualitative evaluations for understanding time delays influencing the transmission dynamics.

\keywords{Stability analysis \and Dynamical systems \and Dengue fever model \and Epidemic models \and Delay derivative equations}
 \subclass{  34D20 \and 37G15 \and 92D25 \and  9210}
\end{abstract}

\section{Introduction}
\label{Sec:Introduction}

Dengue fever has recently been identified as the most rapidly transmitted arboviral disease \cite{guzman2015dengue,ojo2017lyapunov}.
The disease has been identified as a major public health concern in tropical and subtropical countries, with a significant impact on population health, as evidenced by several authoritative studies and reports \cite{world2011comprehensive,world2011report,who2014selection}. Although, the dengue virus is common in the subtropical and tropical regions, it has been identified in approximately 128 countries, including those in North America and Europe, which are predominantly non-tropical countries.
Reports indicate that approximately 390 million individuals are infected with dengue annually, resulting in 500,000 hospitalizations and 20,000 deaths from complications of the disease \cite{who2014selection}, \cite{world2011report}. \\
\noindent The economic consequences of the fever epidemic have had a significant impact on several nations. From 2000 to 2007, the total annual cost in the United States, including both hospitalizations and ambulatory cases, was estimated at USD 2.1 billion \cite{shepard2011economic}.
Therefore, it is essential to ensure the implementation of health policy and the establishment of disease control schemes are needed \cite{luh2018economic}.\\
\noindent The Aedes albopictus and Aedes aegypti are the two main species vectors  \cite{world2011comprehensive}. Four serotypes of the virus can cause dengue fever: DEN-1, DEN-2, DEN-3, and DEN-4.
Dengue fever is primarily transmitted by Aedes aegypti and Aedes albopictus mosquitoes, which become infected when they bite a person carrying the dengue virus. Following an incubation period of 8 to 12 days, the virus disseminates to the mosquito's salivary glands. This establishes the potential for transmission to other humans through subsequent bites. In humans, the virus incubates for 4 to 10 days before symptoms manifest.
The symptoms of the disease include back pain, high fever, skin rash, muscular pains, nausea, eyes soreness, vomiting, face redness, red eyes, severe malaise, extreme weakness, and death.
During this viremic phase, infected individuals can transmit the virus to other mosquitoes.
Environmental factors, such as temperature and humidity, significantly influence the transmission dynamics, which affect mosquito populations and breeding sites \cite{gubler1998dengue,guzman2002effect}. In light of the increasing incidence of dengue virus disease in recent years, numerous studies have been undertaken to develop an optimally efficacious vaccine. Meanwhile, a satisfactory preventive vaccine remains unavailable; however, a partially active vaccine is available \cite{east2016world,tan2018autochthonous}. Thus, scientists rely on management policies and control mechanisms to eliminate viral transmission.

\noindent Epidemiological formulations are a fundamental tool in gaining deeper insight into the dynamics of infectious disease and the implementation of effective
control techniques. The establishment of such mathematical formulations reveals an important area concerning the spreading and control \cite{xiao2013dynamics,ndairou2021fractional}. Mathematical formulations for the dynamical interactions between the vector and the host were first proposed by Lotka \cite{rosshistory} and Ross \cite{ross1911prevention}. They were the first to examine the dynamic of host-vector malaria. The proposed variations in the framework can be adopted for the study of dengue fever \cite{agusto2018optimal,esteva1998analysis,hamdan2019analysis}. Many scholars have previously adopted deterministic compartmental formulations to investigate the dengue disease dynamical propagation in specific areas or countries \cite{abidemi2020optimal,agusto2018optimal,esteva1998analysis,syafruddin2013lyapunov}. Recent mathematical models of dengue fever include different aspects of the disease, such as the effects of vector control, treatment and mass awareness on the transmission dynamics \cite{naaly24}, specifically the birth-pulse mosquito population models with sex structures have been established \cite{zhangli24}, models with linear and nonlinear infection rate have been taken into account \cite{muthu24}, individual immunological variability has also been considered \cite{anam24}.\\
\noindent While significant progress has been made in SIRS modeling, numerous open problems persist within the field. Recently, several authors have performed complex numerical experiments for various infectious disease models to study the sign of the eigenvalues obtained from the characteristic equation, evaluated at the endemic equilibrium \cite{kooi2013bifurcation,steindorf2024symmetry,steindorf2022modeling,stollenwerk2022seasonally,zhang2022bifurcations}. This has facilitated the analysis of the impact of various parameter values on the dynamics of the system and proved that stability of the endemic equilibrium may lead to very complicated dynamics. In this paper, we focus on investigating the bifurcations of a dengue virus transmission model that considers the extrinsic and intrinsic incubation period, as well as the time delay in the recovery period. We demonstrate that the model is susceptible to various kinds of bifurcations, including transcritical bifurcation, period-doubling bifurcation, and Hopf bifurcation, as parameters change. In this context, we show that the delay in the recovery, $\tau_r$, is particularly important for highly complex dynamics. Taking such terms into account in disease modeling may also prove instrumental in understanding different epidemic diseases, including the recent COVID-19 pandemic. For example, by studying period-doubling bifurcations, we can potentially identify parameter regions where complex dynamics, including deterministic chaos, may arise \cite{stollenwerk2022seasonally}.
\\
\noindent In this work, the authors investigate the impact of intrinsic and extrinsic incubation times on dengue fever infection dynamics by using a system of delay derivative equations. It is evident that a more comprehensive model of an epidemic should incorporate not only present-time information but also relevant data over past. Thus, the organisation of the paper is as follows. In section 2, we formulated a dynamical delayed-differential equation for effective dengue virus transmission between humans and vectors. In Section 3, we carried out preliminary results for a non-delayed model, such as positivity and boundedness of solution, non-dimensionalization, and equilibria points of the dimensionless model. Furthermore, in Section 4, we presented stability analysis for the non-delayed system using reproduction number as a threshold for the Dengue-free equilibrium point and Dengue-endemic equilibrium point, while in Section 5, we demonstrated stability analysis for the delayed system  and performed the results for the numerical computation analysis (i.e. periodic oscillations). Finally, in Section 6, we presented the conclusion and future directions.

\section{Model Formulation}
We are motivated by the work of \cite{garba2008backward}, which stratified the two-interacting population into the human population $(N_{h}(t))$ and the vector population $(N_{v}(t))$ to study the dengue infection dynamics in a given locality at any time $t$. We shall make a slight modification to the model in \cite{garba2008backward} by incorporating delay terms $\tau_{h}$ and $\tau_{v}$, which denote the meantime intrinsic and extrinsic incubations for humans and vectors, respectively while also modifying the forces of infection for both humans and vectors to reflect the delay terms. The total human population is divided into four categories based on disease status: susceptible $S_{h}(t)$, recovered $R_{h}(t)$, infectious $I_{h}(t)$ and exposed $E_{h}(t)$.
The total human population at time $t$ then defined as:
	$$N_{h}(t)=S_{h}(t)+E_{h}(t)+I_{h}(t)+R_{h}(t)$$
	Likewise, the vector population is divided into three categories: susceptible vectors $S_{v}(t)$, infectious vectors $I_{v}(t)$ and exposed vectors $E_{v}(t)$, resulting in a total vector population $N_{v}$ given as
	$$N_{v}(t)= S_{v}(t)+E_{v}(t)+I_{v}(t)$$
The mathematical formulation employed in the study of dengue fever dynamics is based on a nonlinear derivative deterministic system of equations:
\begin{equation}
\begin{aligned}
\frac{dS_{h}}{dt}&=\pi_{h} + \omega R_{h} - \lambda_{h}S_{h}-\mu_{h}S_{h},  \\
\frac{dE_{h}}{dt}&= \lambda_{h}S_{h} -(\mu_{h}+\sigma_{h})E_{h}, \\
\frac{dI_{h}}{dt}&=\sigma_{h}E_{h}-(\xi_{h}+\mu_{h}+\delta_{h})I_h,  \\
\frac{dR_{h}}{dt}&=\xi_{h}I_h-(\mu_{h}+\omega)R_{h},  \\
\frac{dS_{v}}{dt}&=\pi_{v}-\lambda_{v}S_{v}-(\mu_{v}+C_{v})S_{v},  \\
\frac{dE_{v}}{dt}&=\lambda_{v}S_{v}-(\theta_{c}+\sigma_{v}+\mu_{v}+C_{v})E_{v},  \\
\frac{dI_{v}}{dt}&=(\theta_{c}+\sigma_{v})E_{v}-(\mu_{v}+C_{v})I_{v},
\end{aligned}
\label{Equ:11_17}
\end{equation}
where $$\lambda_{h}=\frac{b\beta_{hv}I_{v}(t-\tau_{h})}{1+\nu_{h}I_{v}(t-\tau_{h})}, \qquad \lambda_{v}=\frac{b\beta_{vh}I_{h}(t-\tau_{v})}{1+\nu_{v}I_{h}(t-\tau_{v})}, \qquad I_h= I_{h}(t-\tau_r).$$
The emerging variables and parameters from the model are presented in Table \ref{Table1}.
\vskip0.4cm\noindent
\noindent The intrinsic $(\tau_{h})$ and extrinsic $(\tau_{v})$ delays refer to the periods associated with the virus's lifecycle in human hosts and mosquito vectors. These delays are critical for understanding how the disease spreads and predicting an outbreak's dynamics. The intrinsic delay refers to the time it takes for an infected person with dengue virus to become infectious to mosquitoes while the extrinsic delay refers to the time it takes for the dengue virus to become transmissible within a mosquito after the mosquito has bitten an infected human. Finally, the term $I_h(t-\tau_r)$ denotes the number of infectious individuals who commence the recovery process after a period of time $\tau_r$ \cite{gonccalves2011oscillations}.\\
\noindent It should be noted that the mean time extrinsic incubation, denoted by $\tau_{v}$, and the mean time intrinsic incubation, denoted by $\tau_{h}$, are to be understood as separate quantities. \\
The term $\lambda_{h}=\frac{b\beta_{hv}I_{v}(t-\tau_{h})}{1+\nu_{h}I_{v}(t-\tau_{h})}$ indicates that the infection time $t$ is induced by an infectious class $I_{v}(t)$, and the earlier infection of $\tau_h$ units in time. Accordingly, $\tau_h$ denotes average incubation time of the produced human antibody in reaction to the vector infection.
Similarly, the term $\lambda_{v}=\frac{b\beta_{vh}I_{h}(t-\tau_{v})}{1+\nu_{v}I_{h}(t-\tau_{v})}$ denotes the transmission time $t$ prompted in infectious class $I_{h}(t)$, and $\tau_v$ represents the earlier infected time unit.
Therefore, the term $\tau_v$ represents the average duration of the incubation period associated with vector-produced antibodies in response to human infection.

\begin{table}[htbp]
    \centering
    \caption{Description of the variables and parameters of the dengue fever model \eqref{Equ:11_17}.}
    \begin{tabularx}{\textwidth}{c | l}
    \hline  \hline
		Variable & Description\\
		\hline  \hline
		$S_{h}$ & Susceptible humans population\\
		$E_{h}$ & Exposed humans population\\
		$I_{h}$ & Infectious humans population\\
		$R_{h}$ & Recovered humans population\\
		$S_{v}$ & Susceptible vector population\\
		$E_{v}$ & Exposed vector population\\
		$I_{v}$ & Infectious vector population\\
		\hline \hline
		Parameter & Description\\
		\hline \hline
		$\pi_{h}, \pi_{v}$ & Humans and vector recruitment rate respectively\\
		$\beta_{hv}$ & Probability of human-to-vector transmission\\
		$\beta_{vh}$ & Probability of vector-to-human transmission\\
		$b$ & Vector rate of biting \\
		$\nu_{h}$ & Human antibody proportion in reaction to infection incidence\\& induced by vector \\
		$\nu_{v}$ & Vector antibody proportion in reaction to infection incidence\\& induced by human\\
		$\mu_{v}$, $\mu_{h}$ & Vector and human natural death rate \\
		$\delta_{h}$  & Disease induced death rate for humans\\
		$\sigma_{h}$ & Exposed to infectious module of disease rate of progression \\
		$\sigma_{v}$ & Exposed vector to infectious module of disease rate of progression\\
		$\xi_{h}$ & Recovery human infectious rate due to treatment\\
		$\omega$ & Human immunity losing per capita rate\\
		$C_{v}$ & Control effect rate of vector \\
		$\theta_{c}$ & Extrinsic incubation rate of vector\\
		\hline \hline
    \end{tabularx}
	\label{Table1}
\end{table}

\section{Preliminary results for non-delayed model}
\label{Sec:NoDelay}

\subsection{Positivity and boundedness of solutions}
\label{Sec:Positivity}

\begin{lemma}
	\label{positivitylemma}
	In the absence of delay, let the initial data for the dengue fever model \eqref{Equ:11_17}  be $D(0)$ $\geq 0$, where $D(t)=\left(S_{h}(t), E_{h}(t), I_{h}(t), R_{h}(t), S_{v}(t), E_{v}(t),\right.$ $ \left.I_{v}(t)\right)$. Then the solutions $D(t)$ of the model with non-negative initial data will remain non-negative for all time $t>0$.
\end{lemma}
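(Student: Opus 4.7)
The plan is to establish positive invariance of the non-negative orthant $\mathbb{R}_{\geq 0}^{7}$ under the non-delayed flow by the standard boundary-inspection argument. With all delay terms set to zero, the two forces of infection reduce to $\lambda_{h}=b\beta_{hv}I_v/(1+\nu_h I_v)$ and $\lambda_{v}=b\beta_{vh}I_h/(1+\nu_v I_h)$, both of which are non-negative whenever $I_v,I_h\ge 0$, with denominators automatically positive.

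First I would perform a boundary check for each compartment: evaluate the right-hand side of its evolution equation on the hyperplane where that compartment vanishes and verify non-negativity. A direct inspection of \eqref{Equ:11_17} yields $\dot S_h|_{S_h=0}=\pi_h+\omega R_h\ge 0$, $\dot E_h|_{E_h=0}=\lambda_h S_h\ge 0$, $\dot I_h|_{I_h=0}=\sigma_h E_h\ge 0$, $\dot R_h|_{R_h=0}=\xi_h I_h\ge 0$, $\dot S_v|_{S_v=0}=\pi_v\ge 0$, $\dot E_v|_{E_v=0}=\lambda_v S_v\ge 0$, and $\dot I_v|_{I_v=0}=(\theta_c+\sigma_v)E_v\ge 0$, each valid whenever the remaining coordinates are non-negative. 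Thus on every coordinate hyperplane the vector field is either inward-pointing or tangential.

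Next I would promote this pointwise check to a global invariance statement by a minimality argument. Define $t^{\star}=\inf\{t>0:D_i(t)<0\text{ for some }i\}$ and assume, for contradiction, that $t^{\star}<\infty$. By continuity of the Lipschitz flow together with the minimality of $t^{\star}$, at $t^{\star}$ every coordinate is non-negative while at least one $D_i$ equals zero; the boundary check then forces $\dot D_i(t^{\star})\ge 0$, contradicting the fact that $D_i$ must cross zero from above at $t^{\star}$. As an equivalent route I would mention the integrating-factor representation: each equation takes the linear form $\dot x+p(t)x=q(t)$ with $p$ continuous and $q(t)\ge 0$ provided the remaining compartments are non-negative, which yields
$$x(t)=x(0)\,e^{-\int_{0}^{t}p(s)\,ds}+\int_{0}^{t}e^{-\int_{s}^{t}p(u)\,du}\,q(s)\,ds\ge 0.$$

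The only delicate point, and what I expect to be the main (though minor) obstacle, is the apparent circularity induced by the coupling: non-negativity of $S_h$ relies on $R_h\ge 0$ through $\omega R_h$, non-negativity of $E_h$ relies on $I_v\ge 0$ through $\lambda_h$, and so on around the whole loop. This is exactly what the minimality-of-$t^{\star}$ argument dissolves, since by definition every other coordinate is still non-negative up to and at $t^{\star}$. No deeper analytic tool is required; the proof reduces to the compartment-by-compartment verification above plus one continuity argument.
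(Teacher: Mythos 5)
Your argument is correct, and it is in fact more complete than the one in the paper. The paper's proof works only with the $S_h$ equation: it discards the $\omega R_h$ term (tacitly assuming $R_h\ge 0$), applies the integrating factor $\exp\bigl[\mu_h t+\int_0^t\lambda_h(\zeta)\,\d\zeta\bigr]$ to the resulting differential inequality to conclude $S_h(t)>0$, and then disposes of the remaining six components with the single word ``similarly.'' It never confronts the circular dependence you correctly identify ($S_h$ needs $R_h\ge 0$, $E_h$ needs $I_v\ge 0$, and so on around the loop), which is precisely the point where a one-equation-at-a-time argument is incomplete. Your route --- quasi-positivity of the vector field on each coordinate hyperplane combined with a first-crossing-time argument --- is the standard way to make the claim rigorous for the coupled system, and the integrating-factor representation you cite is essentially the paper's computation recast as the per-equation ingredient of that argument. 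The only soft spot is the final contradiction: $\dot D_i(t^\star)\ge 0$ at the single instant $t^\star$ does not by itself forbid $D_i$ from becoming negative immediately afterwards when the derivative vanishes there; the clean way to close this is the linear representation you already wrote, $x(t)=x(0)\e^{-\int_0^t p}+\int_0^t \e^{-\int_s^t p}q(s)\,\d s$, applied on the maximal interval where all coordinates remain non-negative (or, equivalently, an $\varepsilon$-perturbation of the vector field followed by $\varepsilon\to 0$). This is a routine technicality, and your proof remains strictly more careful than the published one, which both assumes $R_h \geq 0$ without justification and asserts strict positivity $S_h(t)>0$ from a bound that only guarantees non-negativity when $S_h(0)=0$ and $\pi_h$ is merely non-negative.
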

\begin{proof}
	Suppose the initial data set of the dengue model $D(0)\ge 0$  such that $D(t)=\left(S_{h}(t), E_{h}(t), I_{h}(t), R_{h}(t), S_{v}(t), E_{v}(t),\right.$ $ \left.I_{v}(t)\right)$ is non-negative for all $t \textgreater 0$. We begin by showing the positivity of the solutions as follows:\\Taking from the first model equation \eqref{Equ:11_17} that
	\begin{eqnarray}
		\label{intS}
		\frac{\d S_{h}}{\d t}&=&\pi_{h} +\omega R_{h}-\lambda_{h}S_{h}-\mu_{h}S_{h}\geq \pi_{h}-\lambda_{h}S_{h}-\mu_{h}S_{h},
	\end{eqnarray}
	Using the method of integrating factor, Eq. \eqref{intS} takes the form  \\
	\begin{eqnarray}
		\frac{\d}{\d t}\left(S_{h}(t)\textrm{exp}\left[\mu_{h}t+\int_{0}^{t}\lambda_{h}(\zeta)\d \zeta\right]\right)\geq \pi_{h}\textrm{exp}\left[\mu_{h}t+\int_{0}^{t}\lambda_{h}(\zeta)\d \zeta\right], \notag
	\end{eqnarray}
	Hence,
	\begin{eqnarray}
		S_{h}(t_{s}) \textrm{exp}\left[\mu_{h}t+\int_{0}^{t}\lambda_{h}(\zeta)\d \zeta\right]-S_{h}(0)\geq \int_{0}^{t}\pi_{h}\left(\textrm{exp}\left[\mu_{h}x+\int_{0}^{x}\lambda_{h}(\zeta)\d \zeta\right]\right)\d x, \notag
	\end{eqnarray}
	so that,
	\begin{eqnarray}
		S_{h}(t)& \geq& S_{h}(0) \textrm{exp}\left[-\mu_{h}t-\int_{0}^{t}\lambda_{h}(\zeta)\d\zeta\right] \notag\\
		&+& \textrm{exp}\left[-\mu_{h}t-\int_{0}^{t}\lambda_{h}(\zeta)\d\zeta\right] \times \int_{0}^{t}\pi_{h}\left(\textrm{exp}\left[\mu_{h}x+\int_{0}^{x}\lambda_{h}(\zeta)\d \zeta\right]\right)\d x >0. \notag
	\end{eqnarray}
	Similarly, the remaining state variable $\left\{S_{h}(t), E_{h}(t), I_{h}(t),R_{h}(t),S_{v}(t),E_{v}(t),I_{v}(t)\right\} > 0$ at time $t>0$. Thus, the solutions to the $D(t)$ model \eqref{Equ:11_17} remain positive for all initial conditions non-negative .
\end{proof}

\subsection{Invariant region}
\label{Sec:Invariant}

Now, the dengue fever model \eqref{Equ:11_17} is biologically analyzed in the feasible region. Given the feasible region  $\mathcal{J}=\mathcal{J}_{h}\times \mathcal{J}_{v} \in \mathcal{R}_{+}^{4}\times \mathcal{R}_{+}^{3}$ with
$$\mathcal{J}_{h}=\left\{ (S_{h}, E_{h}, I_{h}, R_{h})\in \mathcal{R}_{+}^{4}: N_{h}\leq\frac{\pi_{h}}{\mu_{h}}\right \},$$
and
$$\mathcal{J}_{v}=\left\{(S_{v}, E_{v}, I_{v}) \in \mathcal{R}_{+}^{3}: N_{v}\leq\frac{\pi_{v}}{\mu_{v}+C_{v}} \right\}.$$

\begin{lemma}
	The feasible region $\mathcal{J} \subset \mathcal{R}_{+}^{7}$ is positively invariant for the model \eqref{Equ:11_17} with non-negative initial conditions in $\mathcal{R}_{+}^{7}$.
\end{lemma}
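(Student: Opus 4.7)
The plan is to exploit the fact that $\mathcal{J}$ is defined entirely in terms of the aggregate populations $N_h(t)=S_h+E_h+I_h+R_h$ and $N_v(t)=S_v+E_v+I_v$, so positive invariance reduces to two scalar comparison arguments once positivity from Lemma \ref{positivitylemma} is in hand.

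First, I would sum the four human equations of the system \eqref{Equ:11_17}. The infection terms $\pm\lambda_h S_h$ cancel, and the recovery terms $\pm\omega R_h$ and $\pm\sigma_h E_h$, $\pm\xi_h I_h$ collapse internally, yielding
\begin{equation*}
\frac{\d N_h}{\d t}=\pi_h-\mu_h N_h-\delta_h I_h\le\pi_h-\mu_h N_h,
\end{equation*}
where the inequality uses $I_h\ge 0$ from the positivity lemma. A differential inequality of the form $\dot N_h\le\pi_h-\mu_h N_h$ is integrated by the standard integrating factor $e^{\mu_h t}$, giving
\begin{equation*}
N_h(t)\le\frac{\pi_h}{\mu_h}+\Bigl(N_h(0)-\frac{\pi_h}{\mu_h}\Bigr)e^{-\mu_h t}.
\end{equation*}
Hence if $N_h(0)\le\pi_h/\mu_h$, the bound $N_h(t)\le\pi_h/\mu_h$ is preserved for all $t\ge 0$, and in fact $\limsup_{t\to\infty}N_h(t)\le\pi_h/\mu_h$ regardless of the initial data, which gives attractivity of $\mathcal{J}_h$.

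The argument for vectors is parallel but slightly cleaner. Summing the last three equations of \eqref{Equ:11_17}, all internal transitions cancel and we obtain the exact identity
\begin{equation*}
\frac{\d N_v}{\d t}=\pi_v-(\mu_v+C_v)N_v,
\end{equation*}
whose solution $N_v(t)=\frac{\pi_v}{\mu_v+C_v}+\bigl(N_v(0)-\frac{\pi_v}{\mu_v+C_v}\bigr)e^{-(\mu_v+C_v)t}$ shows that $\mathcal{J}_v$ is positively invariant. Combining both bounds with the componentwise nonnegativity already established, every trajectory starting in $\mathcal{J}=\mathcal{J}_h\times\mathcal{J}_v$ stays in $\mathcal{J}$, so the region is positively invariant.

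The proof is essentially a routine consequence of Lemma \ref{positivitylemma} plus a Gronwall-type comparison; the only mildly delicate step is ensuring the correct sign of the inequality for $\dot N_h$, which relies on the nonnegativity of $I_h$. I would state that step explicitly and then invoke the standard comparison principle rather than reproducing the integrating-factor calculation in full, since it mirrors the one already carried out in the proof of the positivity lemma.
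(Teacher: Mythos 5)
Your proposal is correct and follows essentially the same route as the paper: sum the human and vector compartments to obtain $\dot N_h\le\pi_h-\mu_h N_h$ (using $I_h\ge 0$) and the exact equation $\dot N_v=\pi_v-(\mu_v+C_v)N_v$, integrate, and conclude that the bounds $N_h\le\pi_h/\mu_h$ and $N_v\le\pi_v/(\mu_v+C_v)$ are preserved whenever they hold initially. The attractivity observation you add is also present in the paper, so there is nothing substantive to change.
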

\begin{proof}
	The rate of change of the human and vector total population are given respectively by
	\begin{eqnarray}
		\label{human}
		\frac{\d N_{h}}{\d t}&=&\pi_{h}-\mu_{h}N_{h}(t)-\delta_{h}I_{h}(t) \leq \pi_{h}-\mu_{h}N_{h}(t),
		\end{eqnarray}
		and
		\begin{eqnarray}
			\label{vector}
		\frac{\d N_{v}}{\d t}&=&\pi_{v}-(\mu_{v}+C_v) N_{v}(t),
	\end{eqnarray}
where $N_{h}=S_{h}+E_{h}+I_{h}+R_{h}$ and $N_{v}=S_{v}+E_{v}+I_{v}$.
The above equation \eqref{human} and \eqref{vector} yields $N_{h}(t)\leq N_{h}(0)e^{-\mu_{h}t} +\frac{\pi_{h}}{\mu_{h}}\left(1-e^{-\mu_{h}t}\right)$ and $N_{v}(t)= N_{v}(0)e^{-(\mu_{v}+C_{v})t} +\frac{\pi_{v}}{\mu_{v}+C_{v}}\left(1-e^{-(\mu_{v}+C_{v})t}\right)$. It follows that $N_{h}(t) \to \frac{\pi_{h}}{\mu_{h}}$ and $N_v(t) \to \frac{\pi_{v}}{\mu_{v}+C_{v}}$ as $t \to \infty$. In particular, $N_{h}(t)\leq \frac{\pi_{h}}{\mu_{h}}$ and $N_{v}(t)\leq \frac{\pi_{v}}{\mu_{v}+C_{v}}$ if the respective total human and vector population at the initial time $N_{h}(0)\leq \frac{\pi_{h}}{\mu_{h}}$ and $N_{v}(0)\leq \frac{\pi_{v}}{\mu_{v}+C_{v}}$. Hence, the feasible region $\mathcal{J}$ is positively invariant.
\end{proof}
Thus, it is appropriate to dynamically examine the dengue infection formulation \eqref{Equ:11_17} in the bio-feasible region $\mathcal{J}$ in which the formulation is epidemiologically and mathematically well-defined \cite{ojo2017lyapunov,ojo2021modeling}.

\subsection{Non-dimensionalization}
\label{Sec:NonD}
Introducing dimensionless variables:
\begin{align*}
\tilde{S}_h & = \frac{\mu_h}{\pi_h} S_h \qquad \tilde{E}_h = \frac{v_v \sigma_h}{\mu_h}E_h, \qquad \tilde{I}_h = v_v I_h, \qquad \tilde{R}_h = \frac{v_v \mu_h}{\xi_h}R_h, \\
\tilde{S}_v & = \frac{\mu_h}{\pi_v}S_v, \qquad \tilde{E}_v = \frac{v_h(\theta_c+\sigma_v)}{\mu_h} E_v, \qquad \tilde{I}_v =v_h I_v, \qquad \tilde{t}=\mu_h t,
\end{align*}
and new dimensionless parameters:
\begin{align*}
\tilde{\omega}& = \frac{\omega \xi_h}{\mu_h \pi_h v_v}, \quad \tilde{b}_{1h} = \frac{b \beta_{hv}}{\mu_h v_h}, \quad \tilde{\alpha}_{1} = \frac{\pi_h \sigma_h v_v}{\mu_h^2}, \quad \tilde{\sigma}_h = 1+\frac{\sigma_h}{\mu_h}, \\
\tilde{\xi}_n & =1+ \frac{\xi_h+\delta_h}{\mu_h},\quad  \tilde{\eta}_h = 1+\frac{\omega}{\mu_h},\quad \tilde{b}_{1v} = \frac{b \beta_{vh}}{v_v \mu_h}, \quad \tilde{c}_{1v} = 1+\frac{c_v}{\mu_h},\\ \tilde{\alpha}_2 &= \frac{\pi_h v_h (\theta_c+\sigma_v)}{\mu_h^2}, \quad \tilde{c}_{2v} = 1+\frac{\theta_c+\sigma_v+c_v}{\mu_h}.
\end{align*}
the dimensionless version of the non-delayed model is given in the following
\begin{align}
\frac{dS_{h}}{dt}&=1+ \omega R_{h} -\lambda_h S_h - S_{h}, \label{equ21} \\
\frac{dE_{h}}{dt}&= \alpha_1 \lambda_h S_h - \sigma_h E_{h}, \qquad \sigma_h>1 \label{equ22}\\
\frac{dI_{h}}{dt}&= E_{h}- \xi_{h} I_{h}, \qquad \xi_h>1 \label{equ23} \\
\frac{dR_{h}}{dt}&= I_{h}-\eta_{h} R_{h}, \qquad \eta_h>1 \label{equ24}  \\
\frac{dS_{v}}{dt}&=1- \lambda_v S_v  -c_{1v} S_{v},\qquad c_{1v}>1 \label{equ25} \\
\frac{dE_{v}}{dt}&= \alpha_2 \lambda_v S_v  - c_{2v}  E_{v}, \qquad c_{2v}>1 \label{equ26} \\
\frac{dI_{v}}{dt}&= E_{v}- c_{1v} I_{v}, \label{equ27}
\end{align}
where
\begin{equation}
\lambda_{h}=\frac{b_{1h} I_{v}(t-\tau_{h})}{1+I_{v}(t-\tau_{h})}, \qquad \lambda_{v}=\frac{b_{1v} I_{h}(t-\tau_{v})}{1+I_{h}(t-\tau_{v})}, \qquad I_h = I_{h}(t-\tau_r). \label{Equ:Dimless}
\end{equation}

Here $(\tilde{\cdot})$ is omitted for simplicity.


\subsection{Equilibria of the dimensionless model}

The system \eqref{equ21}-\eqref{equ27} always has a disease-free equilibrium that is
$$D_{0} = \left(S_{ho}, E_{h0}, I_{h0}, R_{h0}, S_{v0}, E_{v0}, I_{v0}  \right)=\left(1, 0, 0, 0, \frac{1}{c_{1v}}, 0, 0 \right)$$
which corresponds to the disappearance of dengue disease.
One possible way to  understand more about the other roots of the equations
for equilibria points of the system \eqref{equ21}-\eqref{equ27} is to express all the variables in terms of $S_h$. First, we will derive the expressions for  these equations in this section. Using \eqref{equ21} and \eqref{equ22} gives
\begin{equation}
\frac{\sigma_h}{\alpha_1}E_h = 1+\omega R_h -S_h,\label{equ3}
\end{equation}
Note that Eqs. \eqref{equ23} and \eqref{equ24} lead to  $E_h = \xi_h I_h=\xi_h \eta_h R_h$. Substituting this in \eqref{equ3} gives the relation
\begin{equation}
R_h = \frac{\alpha_1 (1-S_h)}{(\sigma_h \xi_h \eta_h - \alpha_1 \omega)}.
\end{equation}
with $\sigma_h \xi_h \eta_h > \alpha_1 \omega$.
It is now straightforward to find the expressions for $I_h$ and $E_h$ in terms of $S_h$ as
\begin{equation}
 I_h = \frac{\eta_h \alpha_1 (1-S_h)}{(\sigma_h \xi_h \eta_h - \alpha_1 \omega)} \quad \textrm{and} \quad E_h = \frac{\xi_h \eta_h \alpha_1 (1-S_h)}{(\sigma_h \xi_h \eta_h - \alpha_1 \omega)}.
 \label{equ4}
\end{equation}
Similarly substituting $E_h$ given in  Eq. \eqref{equ4} into \eqref{equ22}:
\begin{equation}
I_v = \frac{\sigma_h \xi_h \eta_h (1-S_h)}{b_{1h} S_h (\sigma_h \xi_h \eta_h -\alpha_1 \omega) -\sigma_h \xi_h \eta_h (1-S_h)}.
\label{equ5}
\end{equation}
Since  $E_v= c_{1v} I_v$ in Eq. \eqref{equ27},  it is straightforward to obtain $E_v$ with respect to $S_h$. Besides, using Eq. \eqref{equ25} and the first expression of Eq.  \eqref{equ4}, susceptible vector population $S_v$ takes the form of susceptible human population as
\begin{equation}
S_v = \frac{\sigma_h \xi_h \eta_h -\alpha_1 \omega+\eta_h \alpha_1 (1-S_h)}{c_{1v} (\sigma_h \xi_h \eta_h -\alpha_1 \omega)+ \eta_h \alpha_1 (c_{1v}+b_{1v})(1-S_h) }.
\label{equ6}
\end{equation}
Lastly, substituting  the expressions for $S_v$ and $E_v$ in Eq.  \eqref{equ25}, the analytical expression for $S_h$ can be found in terms of system parameters using
\begin{equation}
S_h = 1- \left(\frac{u_2 u_4-u_1 u_5}{u_1 u_6-u_3 u_4}\right),
\label{equ7}
\end{equation}
where
\begin{align*}
u_1 =& \eta_h \alpha_1 b_{1v},\\
u_2 =& c_{1v} (\sigma_h\xi_h\eta_h-\alpha_1 \omega),\\
u_3=& \eta_h \alpha_1 (c_{1v}+b_{1v}),\\
u_4 = & c_{1v} c_{2v}\sigma_h \xi_h \eta_h,\\
u_5 =& \alpha_2 b_{1h} (\sigma_h\xi_h\eta_h-\alpha_1 \omega),\\
u_6 =& -\alpha_2 ( b_{1h} (\sigma_h\xi_h\eta_h-\alpha_1 \omega)+\sigma_h \xi_h \eta_h),
\end{align*}
One should note that all entrenched terms from the model are positive for their biological meaning, and hence Eq. \eqref{equ7} has a positive root if $u_4(u_2+u_3)<u_1(u_5+u_6)$.

\section{Stability Analysis for the non-delayed system using Reproduction Number threshold}

The disease threshold $R_0$ can be defined as the average number of dengue disease caused by a unitary infected vector or human when introduced into a population containing only susceptible individuals. The reproduction number can only be valid for dimensional models and real model parameters obtained from various statistical methods. By applying non-dimensionalization, we remove the dimensions and focus only on dynamics. Therefore, the reproduction number $R_0$  can be called differently.

\noindent Using the dimensionless model given by \eqref{equ21}-\eqref{equ27}, a relation between $\lambda_h$ and $\lambda_v$ can be obtained with
\begin{equation}
	\label{Equ:S9}
	\lambda_{v} = \frac{A_{1}\lambda_{h}}{A_{2}\lambda_{h} + \eta_{h}\sigma_{h}\xi_{h}}, \quad \textrm{or} \quad \lambda_{h} = \frac{A_{3}\lambda_{v}}{A_{4}\lambda_{v} + A_{5}},
\end{equation}\\
where $A_{1}= b_{1v}\alpha_{1}\eta_{h}$,
 $ A_{2} = \eta_{h}\sigma_{h}\xi_{h}+\eta_{h}\alpha_{1}-\omega\alpha_{1}$, $A_{3} = b_{1h} \alpha_{2}$, $A_{4} = c_{1v}c_{2v} + \alpha_{2}$, $ A_{5} = c_{2v} c_{1v}^2$.
This leads to
\begin{equation*}
(A_{1}A_{4} + A_{2}A_{5})\lambda_{h}^{2} + (\eta_{h}\sigma_{h}\xi_{h}A_{5} - A_{1}A_{3})\lambda_{h} = 0,
\end{equation*}
resulting in
\begin{equation*}
 \lambda_{h} = 0~~ or~~   \lambda_{h} = \frac{A_{1}A_{3} -\eta_{h}\sigma_{h}\xi_{h}A_{5}}{A_{1}A_{4}+ A_{2} A_{5}} > 0 ~~~
if  ~~R_{0} > 1.
\end{equation*}
Namely, one can conclude
\begin{equation}
\frac{A_{1}A_{3} -\eta_{h}\sigma_{h}\xi_{h}A_{5}}{A_{1}A_{4}+ A_{2} A_{5}} = \frac{\eta_{h}\sigma_{h}\xi_{h} A_{5}}{\big(A_{1}A_{4}+A_{2}A_{5}\big)}\left(\frac{A_{1} A_{3}}{\eta_{h}\sigma_{h}\xi_{h} A_{5}} -1 \right),
\label{Equ:R00}
\end{equation}
with
\begin{equation*}
\frac{\eta_{h}\sigma_{h}\xi_{h} A_{5}}{\big(A_{1}A_{4}+A_{2}A_{5}\big)}\left(R_{0}^{2} -1 \right) > 0, ~~  if~~ R_{0} > 1.
\end{equation*}
Thus,
\begin{equation*}
\lambda_{h} = 0, ~~or~~ \lambda_{h} = \frac{\eta_{h}\sigma_{h}\xi_{h} A_{5}}{\big(A_{1}A_{4}+A_{2}A_{5}\big)} (R_{0}^{2} - 1).
\end{equation*}
Note that,
\begin{equation*}
A_{1}A_{4}+A_{2}A_{5} =  A_{1}A_{4}+A_{5}(\eta_{h}\sigma_{h}\xi_{h} + \eta_{h}\alpha_{1} - \omega\alpha_{1}),
\end{equation*}
\begin{equation*}
 =  A_{1}A_{4}+A_{5}(\eta_{h}\sigma_{h}\xi_{h} + (\eta_{h} - \omega)\alpha_{1} ) > 0, ~~if ~~\eta_{h} \geq \omega.
\end{equation*}
The matrix $\mathcal{F}$ associated with new
infections and the matrix $\mathcal{V}$ containing the remaining expressions are given by\\
\\
$$\mathcal{F}=\left[\begin{array}{cccc}
 \alpha_{1} \lambda_h S_h
\\
0
\\
 \alpha_2 \lambda_v S_v
\\
 0
\end{array}\right]
 \quad \textrm{and} \quad
-\mathcal{V}=\left[\begin{array}{cccc}
 \sigma_h E_h
\\
-E_h+\xi_h I_h
\\
c_{2v} E_v
\\
 -E_v+c_{1v}I_v
\end{array}\right].$$
\\
Here, using the next-generation matrix technique, the transmission $(F)$ and transition $(V)$ matrices of the system \eqref{equ21}-\eqref{equ27} at dengue-free equilibrium are:

$$F|_{D0}=\left[\begin{array}{cccc}
0 & 0 & 0 & \alpha_{1} b_{1h}
\\
 0 & 0 & 0 & 0
\\
 0 & \frac{\alpha_{2} b_{1v}}{c_{1v}} & 0 & 0
\\
 0 & 0 & 0 & 0
\end{array}\right]
 \quad \textrm{and} \quad V|_{D0}=\left[\begin{array}{cccc}
\sigma_{h} & 0 & 0 & 0
\\
 -1 & \xi_{h} & 0 & 0
\\
 0 & 0 & c_{2 v} & 0
\\
 0 & 0 & -1 & c_{1v}
\end{array}\right].$$
\\
The the basic reproduction number $R_{0}$ is described  to be the spectral radius of the matrix $FV^{-1}$. Here the domainant eigenvalue is associated with the basic reproduction ratio that is \\
\begin{equation}
	R_{0}=\sqrt{\frac{A_{1} A_{3}}{\eta_{h}\sigma_{h}\xi_{h} A_{5}}}.
	\label{R0S}
\end{equation}
The Eq. \eqref{R0S} is described as the number of secondary infections caused by an infectious host and vector in the infection stage. It is important to note that infected vectors are produced by infected hosts, and vice-versa as a result of cross-infections between humans and vectors. The square root in Eq. \eqref{R0S} rises because it describes the number of expected geometric mean in human and vector secondary cases.

\noindent {\bf Case 1:} $\lambda_{h} = 0$ (Dengue-free equilibrium point that is  no infection transmission in the  population)

By substituting $\lambda_{h} = 0$ in Eq. \eqref{Equ:S9} gives $\lambda_{v} = 0$.
Thus,
\begin{equation*}
(\lambda_{h0}, \lambda_{v0}) = (0, 0), ~~ \textrm{corresponding to the DFE.}
\end{equation*}
Hence, the components of the DFE denoted by $D_{0}$ is

\begin{equation*}
D_{0} = \left(S_{ho}, E_{h0}, I_{h0}, R_{h0}, S_{v0}, E_{v0}, I_{v0}  \right),
\end{equation*}
 where
$$S_{ho} = 1,~ E_{h0} =0,~ I_{h0} = 0,~ R_{h0} = 0,~ S_{v0} = \frac{1}{c_{1v}},~ E_{v0} = 0,~ I_{v0} = 0.$$

\noindent {\bf Case 2:} $\lambda_{h} = \dfrac{\eta_{h}\sigma_{h}\xi_{h} A_{5}}{\big(A_{1}A_{4}+A_{2}A_{5}\big)}\left(R_{0}^{2} -1 \right)$  (Dengue-endemic  equilibrium point that is a state  where there is dengue transmission in both population)\\

By substituting $\lambda_{h} = \dfrac{\eta_{h}\sigma_{h}\xi_{h} A_{5}}{\big(A_{1}A_{4}+A_{2}A_{5}\big)}\left(R_{0}^{2} -1 \right)$ in \eqref{Equ:S9}, we have
\begin{equation*}
\label{S99}
\lambda_{v} = \frac{A_{1}\dfrac{\eta_{h}\sigma_{h}\xi_{h} A_{5}}{A_{1}A_{4}+A_{2}A_{5}}\left(R_{0}^{2} -1 \right)}{A_{2}\dfrac{\eta_{h}\sigma_{h}\xi_{h} A_{5}}{A_{1}A_{4}+A_{2}A_{5}}\left(R_{0}^{2} -1 \right) + \eta_{h}\sigma_{h}\xi_{h}}> 0, ~~if~~R_{0} > 1.
\end{equation*}
Thus,
\begin{equation*}
(\lambda_{h}^{*},\lambda_{v}^{*}) = \left( \frac{\eta_{h}\sigma_{h}\xi_{h} A_{5}}{A_{1}A_{4}+A_{2}A_{5}}\left(R_{0}^{2} -1 \right),   \frac{A_{1}\dfrac{\eta_{h}\sigma_{h}\xi_{h} A_{5}}{A_{1}A_{4}+A_{2}A_{5}}\left(R_{0}^{2} -1 \right)}{A_{2}\dfrac{\eta_{h}\sigma_{h}\xi_{h} A_{5}}{A_{1}A_{4}+A_{2}A_{5}}\left(R_{0}^{2} -1 \right) + \eta_{h}\sigma_{h}\xi_{h}} \right),
\end{equation*}
corresponds to the Dengue-endemic equilibrium, $ D^{*}$ with the following components:
\begin{equation*}
D^{*} =  \left(S_{h}^{*}, E_{h}^{*}, I_{h}^{*}, R_{h}^{*}, S_{v}^{*}, E_{v}^{*}, I_{v}^{*}  \right),
\end{equation*}
where
\begin{equation}
\begin{aligned}
S_{h}^{*} &= \frac{\eta_{h}\xi_{h}\sigma_{h}}{\big(\eta_{h}\sigma_{h}\xi_{h}-\omega\alpha_{1}\big)\lambda^{\star}+\eta_{h}\sigma_{h}\xi_{h}},~~ E_{h}^{*} = \frac{\eta_{h}\xi_{h}\alpha_{1}\lambda_{h}^{\star}}{\big(\eta_{h}\sigma_{h}\xi_{h}-\omega\alpha_{1}\big)\lambda^{\star}+\eta_{h}\sigma_{h}\xi_{h}},\\ I_{h}^{*} &= \frac{\eta_{h}\alpha_{1}\lambda_{h}^{\star}}{\big(\eta_{h}\sigma_{h}\xi_{h}-\omega\alpha_{1}\big)\lambda^{\star}+\eta_{h}\sigma_{h}\xi_{h}},~~
	R_{h}^{*} = \frac{\alpha_{1}\lambda_{h}^{\star}}{\big(\eta_{h}\sigma_{h}\xi_{h}-\omega\alpha_{1}\big)\lambda^{\star}+\eta_{h}\sigma_{h}\xi_{h}},\\
	S_{v}^{*} &= \frac{1}{\lambda_{v}^{*} + c_{1v}}, ~~E_{v}^{*} = \frac{\alpha_{2}\lambda_{v}^{*}}{c_{2v}(\lambda_{v}^{*} + c_{1v})},~~~ I_{v}^{*}= \frac{\alpha_{2} \lambda_{v}^{*}}{c_{2v}c_{1v}(\lambda_{v}^{*} + c_{1v})}. \nonumber
\end{aligned}
\end{equation}

\begin{theorem}
Consider the nondimensional system  \eqref{equ21}-\eqref{equ27}, there exists a unique positive endemic equilibrium if and only if $R_{0} > 1$, otherwise. The result follows that the DFE, $D_{0}$ is locally stable if $R_{0} < 1$ otherwise unstable.
\end{theorem}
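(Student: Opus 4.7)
The plan is to treat the two claims separately, relying on the algebraic reductions already carried out in the previous subsection for the equilibrium part and on the next-generation matrix machinery of van den Driessche and Watmough for the local stability part.

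For existence and uniqueness of the endemic equilibrium, I would start from the quadratic equation $(A_1 A_4 + A_2 A_5)\lambda_h^2 + (\eta_h \sigma_h \xi_h A_5 - A_1 A_3)\lambda_h = 0$ obtained by eliminating $\lambda_v$ between the two relations in \eqref{Equ:S9}. Its roots are $\lambda_h=0$, corresponding to $D_0$, and $\lambda_h^{\star}=\frac{\eta_h \sigma_h \xi_h A_5}{A_1 A_4 + A_2 A_5}(R_0^2-1)$. Under the standing assumption $\eta_h\geq\omega$ recorded in the non-dimensionalisation, the denominator $A_1 A_4+A_2 A_5$ is strictly positive, so $\lambda_h^{\star}>0$ if and only if $R_0>1$; since the quadratic has at most one nonzero root, this automatically provides uniqueness at the level of the pair $(\lambda_h,\lambda_v)$. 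Substituting $\lambda_h^{\star}$ into the explicit formulas for $(S_h^*,E_h^*,I_h^*,R_h^*,S_v^*,E_v^*,I_v^*)$ displayed just above the theorem, the condition $\sigma_h \xi_h \eta_h>\alpha_1 \omega$ used when deriving $R_h$ guarantees that every denominator is positive and every numerator is strictly positive whenever $\lambda_h^{\star}>0$. Hence $D^*$ lies in the interior of the positive cone exactly when $R_0>1$, and when $R_0\leq 1$ the only equilibrium in $\mathcal{J}$ is $D_0$.

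For local stability of the DFE, I would invoke the standard next-generation theorem. The matrices $F|_{D_0}$ and $V|_{D_0}$ already exhibited satisfy the required sign conditions: $F$ is componentwise nonnegative, and $V$ is a nonsingular M-matrix, being lower triangular with positive diagonal entries $\sigma_h,\xi_h,c_{2v},c_{1v}$ and nonpositive off-diagonals. The spectral radius $\rho(FV^{-1})$ coincides with $R_0$ in \eqref{R0S}. The Jacobian of the full seven-dimensional system at $D_0$ block-decomposes into the $4\times 4$ infected-compartment block $F-V$ and a complementary block whose eigenvalues are the diagonal entries $-1,-\eta_h,-c_{1v}$ coming from the $S_h,R_h,S_v$ equations, all strictly negative. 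Thus the sign of $\max\mathrm{Re}\,\lambda$ over the full spectrum is governed by $F-V$, and the van den Driessche--Watmough theorem gives that $D_0$ is locally asymptotically stable when $R_0<1$ and unstable when $R_0>1$.

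The main obstacle I anticipate is the bookkeeping needed to certify positivity of every component of $D^{\star}$ from the single condition $R_0>1$: the formulas involve $\eta_h\sigma_h\xi_h-\omega\alpha_1$ in the denominators, so one must make explicit which structural assumptions (notably $\eta_h\geq\omega$ together with $\sigma_h\xi_h\eta_h>\alpha_1\omega$) are being used, and check that they are consistent with the non-dimensional parameter constraints $\sigma_h,\xi_h,\eta_h,c_{1v},c_{2v}>1$ already recorded in \eqref{equ22}--\eqref{equ26}. A minor subtlety to flag in the write-up is that the theorem's ``otherwise'' clause carries two meanings, which I would split into the unambiguous statements ``no positive endemic equilibrium exists when $R_0\leq 1$'' and ``$D_0$ is unstable when $R_0>1$''.
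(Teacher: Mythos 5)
Your proposal is correct and follows essentially the same route as the paper: the existence/uniqueness part is exactly the paper's reduction to the quadratic in $\lambda_h$ with nonzero root $\frac{\eta_h\sigma_h\xi_h A_5}{A_1A_4+A_2A_5}(R_0^2-1)$ (Cases 1 and 2), and the DFE stability part is the next-generation-matrix argument that the paper sets up via $F|_{D_0}$, $V|_{D_0}$ and $R_0=\rho(FV^{-1})$, which you merely make explicit by verifying the van den Driessche--Watmough hypotheses and the block-triangular structure of the Jacobian. The only difference is that the paper goes on to supplement the local claim with a Lyapunov-function argument for global stability of $D_0$ when $R_0\leq 1$, which is not needed for the theorem as stated.
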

Suppose that the stability of $D_{0}$ is independent of the initial size of the infected population, thus there is need to consider the global stability of $D_{0}$.
Consider the following Lyapunov function with carefully chosen scalar quantities
\begin{equation}
	\label{S11}
L(E_{h}, I_{h}, E_{v}, I_{v}) = \alpha_{2} b_{1v}E_{h} + \sigma_{h} \alpha_{2} b_{1v}I_{h} + \sigma_{h}\xi_{h}c_{1v}E_{v} + \sigma_{h}\xi_{h}c_{1v}c_{2v}I_{v}.
\end{equation}
The differentiation of \eqref{S11} with respect to $t$ along the system solutions \eqref{Equ:Dimless}, yields
\begin{equation*}
\begin{aligned}
    L^{'} =\alpha_{2} b_{1v}E^{'}_{h} + \sigma_{h} \alpha_{2} b_{1v}I^{'}_{h} + \sigma_{h}\xi_{h}c_{1v}E{'}_{v} + \sigma_{h}\xi_{h} c_{1v} c_{2v}I^{'}_{v}
\end{aligned}
\end{equation*}
\begin{equation*}
    \begin{aligned}
     L^{'} &=\alpha_{2} b_{1v}\big(\alpha_{1} \lambda_{h}S_{h}-\sigma_{h}E_{h}\big) + \sigma_{h} \alpha_{2} b_{1v}\big(E_{h}-\xi_{h} I_{h}\big) \\
     &+\sigma_{h}\xi_{h}c_{1v}\big(\alpha_{2} \lambda_{v} S_{v}-c_{2v}E_{v}\big) + \sigma_{h}\xi_{h}c_{1v} c_{2v}\big(E_{v}-c_{1v} I_{v}\big),
    \end{aligned}
\end{equation*}
leading to
\begin{equation*}
\begin{aligned}
    L^{'} &= \frac{\alpha_{2} b_{1v} \alpha_{1} b_{1h}I_{v}}{1 + I_{v}}S_{h}    + \frac{\alpha_{2}b_{1v}\xi_{h} \sigma_{h} c_{1v}I_{h}}{1 + I_{h}}S_{v} - \sigma_{h}\alpha_{2}\xi_{h}b_{1v}I_{h}- \xi_{h}\sigma_{h}c_{2v}c^{2}_{1v}I_{v}.
\end{aligned}
\end{equation*}
Further simplification yields,
\begin{equation*}
L^{'} =\left( \alpha_{2} b_{1v} \alpha_{1} b_{1h}S_{h} - \xi_{h}\sigma_{h}c_{2v}c^{2}_{1v}\right)I_{v} - \frac{\alpha_{2}b_{1v}\alpha_1 b_{1h}I^{2}_{v}}{(1 + I_{v})}S_{h}- \frac{\xi_{h}\sigma_{h}\alpha_{2}b_{1v} c_{1v} I^{2}_{h}}{(1 + I_{h})}S_{v}.
\end{equation*}
At the disease free equilibrium, we have
\begin{equation*}
S_{h} = 1,~~~S_{v} = \frac{1}{c_{1v}}.
\end{equation*}
Then $L^{'}$ reduced to
\begin{equation*}
L^{'} = \xi_{h}\sigma_{h}c_{2v}c^{2}_{1v}\left(\frac{\alpha_{1}\alpha_{2}b_{1v}b_{1h}}{\xi_{h}\sigma_{h}c_{2v}c^{2}_{1v}} - 1 \right) I_{v} - \frac{\alpha_1 \alpha_{2}b_{1v}b_{1h}I^{2}_{v}}{(1 + I_{v})}- \frac{\xi_{h}\sigma_{h}\alpha_{2}b_{1v}I^{2}_{h}}{(1 + I_{h})}.
\end{equation*}
Therefore,
\begin{equation*}
L^{'} = \xi_{h}\sigma_{h}c_{2v}c^{2}_{1v}\left(R^{2}_{0} - 1 \right) I_{v} - \frac{\alpha_1 \alpha_{2}b_{1v}b_{1h}I^{2}_{v}}{(1 + I_{v})}- \frac{\xi_{h}\sigma_{h}\alpha_{2}b_{1v}I^{2}_{h}}{(1 + I_{h})} < 0,
\end{equation*}
when $ R_{0} \leq 1$.

\noindent Thus, the DFE is globally asymptotically stable. In the case where conditions that favor. Dengue disease to thrive, then the DFE is no longer globally stable, leading to an endemic state. It is desirable to study the stability analysis of the endemic equilibrium. The eigenvalues using the upper triangular matrix of the Jacobian matrix of the system \eqref{equ21}-\eqref{equ27} at the endemic equilibrium $D^{*}$ are given by
\begin{equation*}
\begin{aligned}
\lambda_{1} &= - \left(1 + \frac{b_{1h}I^{*}_{v}}{1+I^{*}_{v}}\right),\quad \lambda_{2} = - \sigma_{h}, \quad \lambda_{3} = - \xi_{h},\quad \lambda_{4} = - \eta_{h},\quad  \\ \lambda_{5} &= -A = -\left(c_{1v} + \frac{b_{1v}I^{*}_{h}}{1 + I^{*}_{h}}\right), \quad \lambda_{6} = -c_{1v}, \quad \lambda_{7} = -c_{1v}.
\end{aligned}
\end{equation*}
Clearly, both $\lambda_{1}$ and $\lambda_{5}$ are negative if $R_{0} > 1$. Therefore, the Dengue-endemic equilibrium is locally asymptotically stable.

\section{Stability Analysis for the delayed system}
\label{Sec:Stability}

The model \eqref{equ21}-\eqref{equ27} can be rewritten in a closed form:
\begin{equation}
\frac{\d}{\d t} \mathcal{V}(t) =H(\mathcal{V}(t), \mathcal{V}(t-\tau_h),  \mathcal{V}(t-\tau_v), \mathcal{V}(t-\tau_r), P),
\end{equation}
where $\mathcal{V}=(S_h, E_h,I_h,R_h,S_v,E_v,I_v)^T$  and $H: \mathbb{R}^{7 \times 7} \times \mathbb{R}^p \to \mathbb{R}^7 $ is a nonlinear function with a number of parameters $P \in \mathbb{R}^p$ and constant time delays $\tau_h$, $\tau_v$ and $\tau_r$. Using the linearization argument, the corresponding system at an equilibrium $\Sigma^*= \left(S_h^*, E_h^*, I_h^*, R_h^*, S_v^*, E_v^*, I_v^*\right)$ can be given as
\begin{equation}
\frac{\d}{\d t} \mathcal{Z}(t) = J_1^0  \mathcal{Z}(t) +\sum\limits_{i} J_{\tau_i}^D  \mathcal{Z}(t-\tau_i), \qquad i=\{h,v,r\}.
\end{equation}
Here, $\mathcal{Z}=(S_h', E_h', I_h', R_h', S_v', E_v', I_v')^T$ where the variables with primes represent perturbed variables for the linearization, e.g. for the susceptible human population we consider $S_h = S_h^* + S_h'$. The matrices are given as
\[J_0^D   = \left. \left(\frac{\partial H}{\partial \mathcal{V}}\right)\right|_{\Sigma^*}, \quad J_{\tau_h}^D   = \left. \left(\frac{\partial H}{\partial \mathcal{V}(t-\tau_h)}\right)\right|_{\Sigma^*},\]\[ J_{\tau_v}^D   = \left. \left(\frac{\partial H}{\partial \mathcal{V}(t-\tau_v)}\right)\right|_{\Sigma^*},\quad J_{\tau_r}^D   = \left. \left(\frac{\partial H}{\partial \mathcal{V}(t-\tau_r)}\right)\right|_{\Sigma^*}.\]
Defining a $7\times 7$ characteristic  matrix
\begin{equation}
J_\tau= J_0^D + \sum\limits_{i} J_{\tau_i}^D  \e^{-\lambda \tau_i}, \qquad i=\{h,v,r\}.
\label{Equ:Mat}
\end{equation}
Then, the corresponding characteristic equation is found by using
\begin{equation}
 \Psi(\lambda) = \textrm{Det} \left( \lambda I_7 - J_\tau\right)=0.
\label{Equ:Det}
\end{equation}
Solution of Eq. \eqref{Equ:Det} gives an infinite number of eigenvalues which determine the stability of the equilibrium. Since eigenvalues sufficiently close to the imaginary axis are enough to analyze stability, a finite number of solutions, where $\textrm{Real}(\lambda)>r,\; r \in \mathbb{R}^+$ can be taken into consideration.

\subsection{Endemic equilibrium}
\label{Sec:EndEqu}
We now focus on the endemic equilibrium given by $\Sigma_1^* = \left(S_h^*, E_h^*, I_h^*, R_h^*, S_v^*, E_v^*, I_v^*\right)$.
Using the linearization argument the explicit form of the matrix given in Eq.  \eqref{Equ:Mat} is given as follows
\[
J_\tau=\begin{pmatrix}
-m-1&0 & 0 & \omega & 0 & 0 &- f\e^{-\lambda \tau_h}\\
\alpha_1 m & -\sigma_h& 0 &0&0&0&\alpha_1 f \e^{-\lambda \tau_h}\\
0 & 1 & -\xi_h \e^{-\lambda \tau_r}  & 0&0&0&0\\
0&0& \e^{-\lambda \tau_r} & -\eta_h&0 &0&0\\
0&0&-h \e^{-\lambda \tau_v}& 0 & -g -c_{1v}&0&0\\
0&0& \alpha_2 h \e^{-\lambda \tau_v}&0&\alpha_2 g& -c_{2v}&0\\
0&0&0&0&0&1&-c_{1v}
\end{pmatrix}.
\]
Therefore, Eq. \eqref{Equ:Det} becomes
\begin{equation}
\Psi_{\Sigma_1^*}(\lambda) =  (\lambda+c_{1v}) \Omega_{\Sigma_1^*}( \lambda,\tau_h, \tau_v,\tau_r),
\label{Equ:Psii1ii}
\end{equation}
where
\begin{equation}
\begin{aligned}
\Omega_{\Sigma_1^*}(\lambda,\tau_h, \tau_v,\tau_r) =& \lambda (\lambda+m+1)(\lambda+c_{1v}+g)(\lambda+c_{2v})(\lambda+\sigma_h) (\lambda+\eta_h ) \\ & \e^{-\lambda \tau_r} (\lambda+c_{2v}) (\lambda+c_{1v}+g) \left[\xi_h  (\lambda+\eta_h ) (\lambda+m+1)  (\lambda+\sigma_h ) \right. \\& \left.-m \alpha_1 \omega  \right]
-\alpha_1 \alpha_2 f h \e^{- \lambda (\tau_h+\tau_v) } (\lambda+1)(\lambda+\eta_h),
\end{aligned}
\end{equation}
by which Eq.~\eqref{Equ:Psii1ii}  leads to a seventh order equation of
\begin{equation}
\begin{aligned}
& \lambda \left[ \lambda^6 + a_1 \lambda^5 + a_2 \lambda^4 + a_3 \lambda^3 + a_4 \lambda^2 + a_5 \lambda+a_6\right]+ \e^{-\lambda \tau_r} \left[ \xi_h\left( \lambda^6 + a_1 \lambda^5 \right. \right. \\&  \left. \left.+ a_2 \lambda^4  + a_3 \lambda^3 + a_4 \lambda^2 + a_5 \lambda + a_6\right) - m \alpha_1 \omega \left(\lambda^3+b_1 \lambda^2 + b_2 \lambda +b_3 \right) \right] \\&+ \e^{- \lambda (\tau_h+\tau_v) }  \left[-\alpha_1 \alpha_2 f h  \left(\lambda^3+c_1 \lambda^2 + c_2 \lambda +c_3\right)  \right] =0,
\end{aligned}
\label{Equ:Chactaui}
\end{equation}
where
\begin{align}
a_1= &2 c_{1v} + c_{2v} + \eta_h + g + m + \sigma_h + 1, \nonumber \\
a_2= & c_{1v} c_{2v} + (m + 1)(c_{1v} + c_{2v} + \eta_h + \sigma_h) + \sigma_h (c_{1v} + c_{2v} + \eta_h) \nonumber \\&+ \eta_h (c_{1v} + c_{2v}) + (c_{1v} + g) (c_{1v} + c_{2v} + \eta_h + m + \sigma_h + 1), \nonumber\\
a_3 =& \sigma_h (c_{1v} c_{2v} + \eta_h (c_{1v} + c_{2v})) + (c_{1v} + g)(c_{1v} c_{2v} + (m + 1) (c_{1v} \nonumber \\&+ c_{2v} + \eta_h + \sigma_h) + \sigma_h (c_{1v}+ c_{2v} + \eta_h) + \eta_h (c_{1v} + c_{2v})) \nonumber \\& + (m + 1)(c_{1v} c_{2v} + \sigma_h (c{1v} + c_{2v} + \eta_h)  + \eta_h (c_{1v} + c_{2v})) + c_{1v} c_{2v} \eta_h, \nonumber   \\
a_4 =& (\sigma_h (c_{1v} c_{2v} + \eta_h (c_{1v} + c_{2v})) + c_{1v} c_{2v} \eta_h)(m + 1) + (c_{1v} + g) (\sigma_h (c_{1v} c_{2v} \nonumber \\&+ \eta_h (c_{1v} + c_{2v})) + (m + 1)(c_{1v} c_{2v} + \sigma_h (c_{1v} + c_{2v} + \eta_h) \label{Equ:aii}  \\& + \eta_h (c_{1v} + c_{2v})) + c_{1v} c_{2v} \eta_h)  + c_{1v} c_{2v} \eta_h \sigma_h, \nonumber \\
a_5 =& (c_{1v} + g)((\sigma_h (c_{1v}c_{2v} + \eta_h (c_{1v} + c_{2v})) + c_{1v} c_{2v} \eta_h) (m + 1) \nonumber \\&+ c_{1v} c_{2v} \eta_h \sigma_h) + c_{1v} c_{2v} \eta_h \sigma_h (m + 1),  \nonumber\\
a_6  =&  c_{1v} c_{2v} \eta_h \sigma_h (c_{1v} + g) (m + 1), \nonumber\\
b_1= &2 c_{1v} + c_{2v} + g, \quad b_2= c_{1v} c_{2v} + (c_{1v} + c_{2v}) (c_{1v} + g), \quad b_3=c_{1v} c_{2v} (c_{1v }+ g), \nonumber \\
c_1=&c_{1v} + \eta_h + 1, \quad  c_2=c_{1v} + \eta_h + c_{1v} \eta_h, \quad c_3=c_{1v} \eta_h, \nonumber
\end{align}
with
\begin{equation}
m=\frac{b_{1h}I_v^*}{1+I_v^*}, \quad f= \frac{b_{1h} S_h^*}{(1+I_v^*)^2}, \quad h= \frac{b_{1v} S_v^*}{(1+I_h^*)^2}, \quad g=\frac{b_{1v} I_h^*}{1+I_h^*}.
\label{Equ:b1hhi}
\end{equation}
It can be easily observed from Eq.  \eqref{Equ:Psii1ii} that there is one negative  eigenvalue, that is $\lambda_1=-c_{1v}$. Note that the values of $\tau_h$ and $\tau_v$ given in Eq. \eqref{Equ:Chactaui}  do not independently affect the model. In fact, the sum of the delay parameters has a role in the system dynamics.
For simplicity we consider $\tau_h=\tau_v=\tau$.

We now show some numerical results complementing the theoretical formulas.
 In the rest of the paper, unless stated otherwise, parameters are fixed to $\omega=2, b_{1h}=4, \alpha_1=1, \sigma_h=1.1, \xi_h=1.2, \eta_h=2, b_{1v}=1.2, c_{1v}=1.2, \alpha_2=1.2, c_{2v}=1.4.$

\noindent In Fig.~\ref{Fig:TE1}, the temporal dynamics of all seven populations is presented for two different delay cases. In the first case, the effect of the change in the extrinsic and intrinsic incubation delay  is shown in Fig.~\ref{Fig:TE1}(a) and Fig.~\ref{Fig:TE1}(b), where  it is assumed that infectious individuals begin to recover without delay, e.g. $\tau_r=0$. Fig.~\ref{Fig:TE1}(a) represents the stable dynamics of the system \eqref{equ21}-\eqref{equ27} in the absence of delay. Increasing incubation delays from $\tau=0$ to $\tau=5$ ($\tau_h=\tau_v=\tau$) in Fig.~\ref{Fig:TE1}(b), all variables stabilise after some initial oscillatory dynamics  ended at around $t=100.$ For mathematical convenience, in Fig.~\ref{Fig:TE1}(c) and Fig.~\ref{Fig:TE1}(d), intrinsic and extrinsic incubation  time is chosen to be half of the recovery delay, e.g. ($\tau_h=\tau_v=\tau$) and $\tau_r=2 \tau$. Considering $\tau=0.55$ ($\tau_r=1.1$) in Fig.~\ref{Fig:TE1}(c) damping oscillations are observed leading to stability and   high frequency periodic oscillations with a Hopf bifurcation is observed with $\tau=0.6$ ($\tau_r=1.2$).

\begin{figure}[ht!]
\centering
\begin{tabular}{cccc}
\includegraphics[scale=0.28]{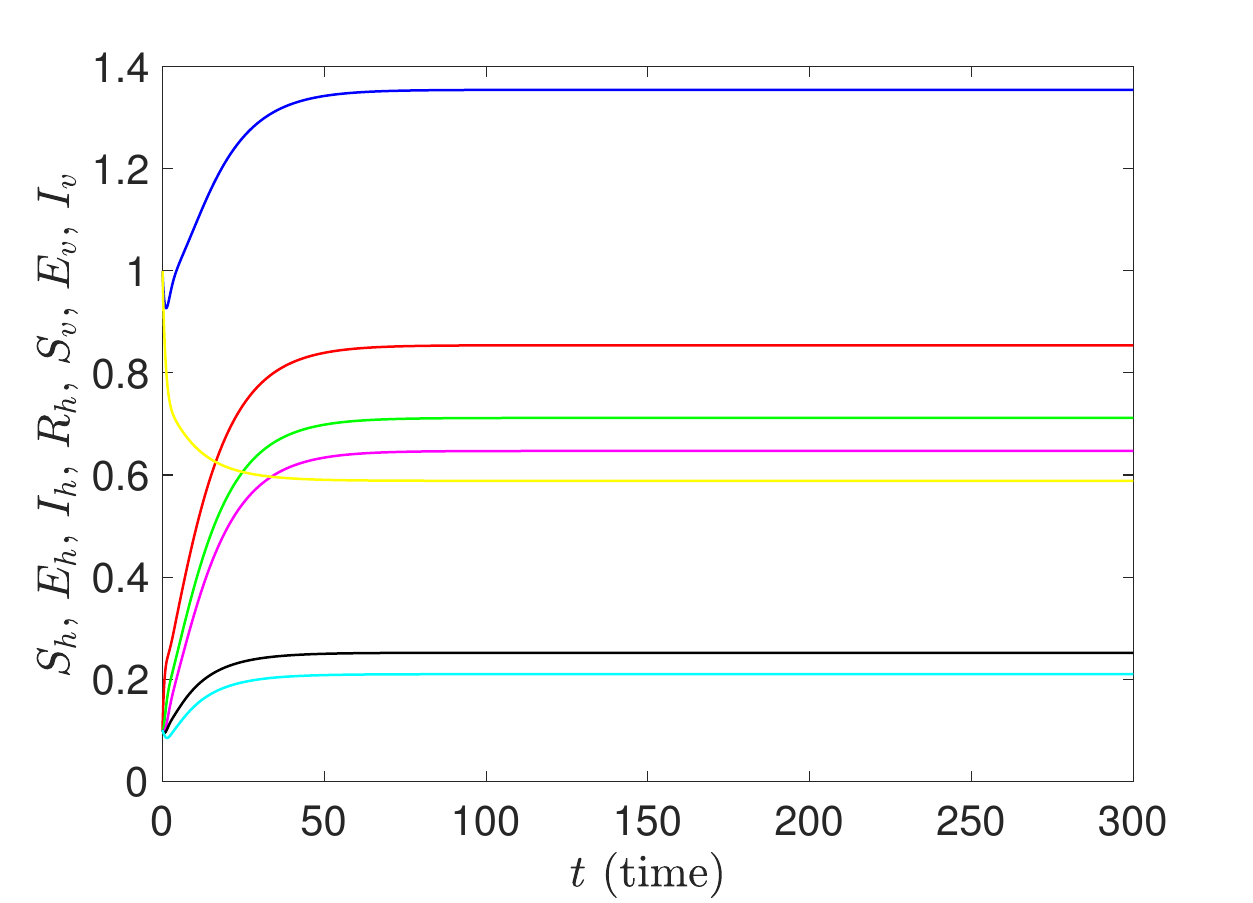} \\
\textbf{(a)}  \\[6pt]
\end{tabular}
\hspace{-1.03cm}
\vspace{-0.3cm}
\begin{tabular}{cccc}
\includegraphics[scale=0.28]{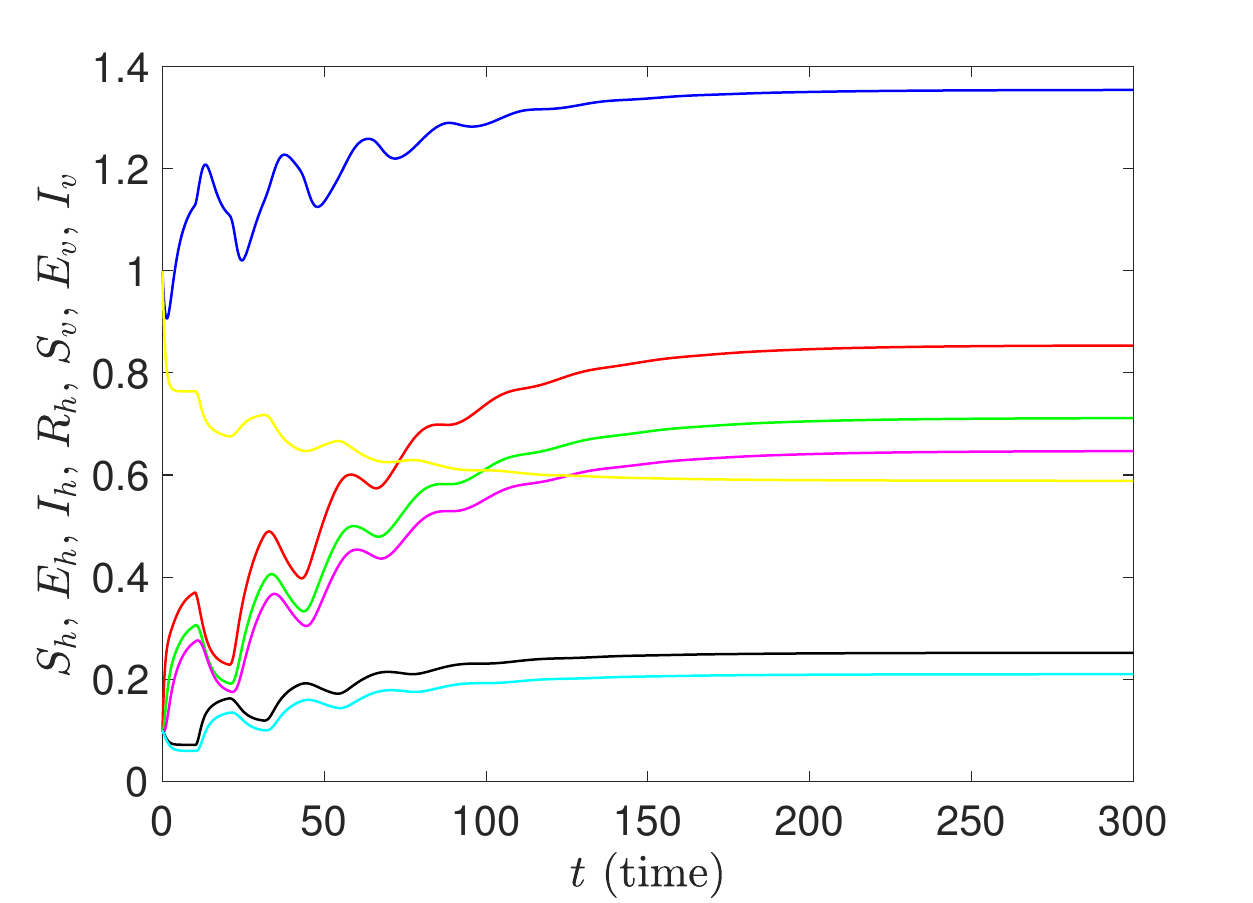} \\
\textbf{(b)}\\[6pt]
\end{tabular}
\begin{tabular}{cccc}
\includegraphics[scale=0.28]{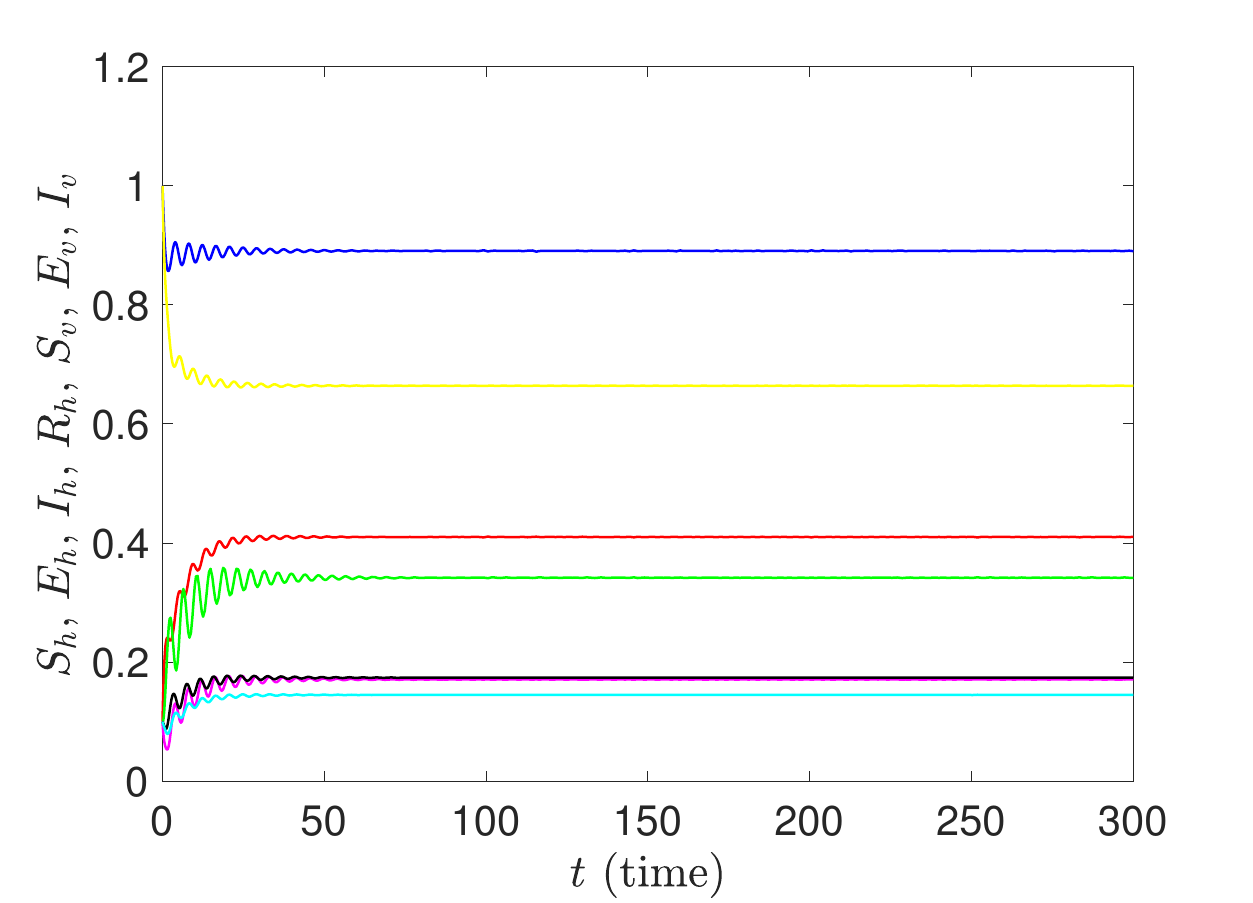} \\
\textbf{(c)}\\[6pt]
\end{tabular}
\hspace{-1.03cm}
\vspace{-0.2cm}
\begin{tabular}{cccc}
\includegraphics[scale=0.28]{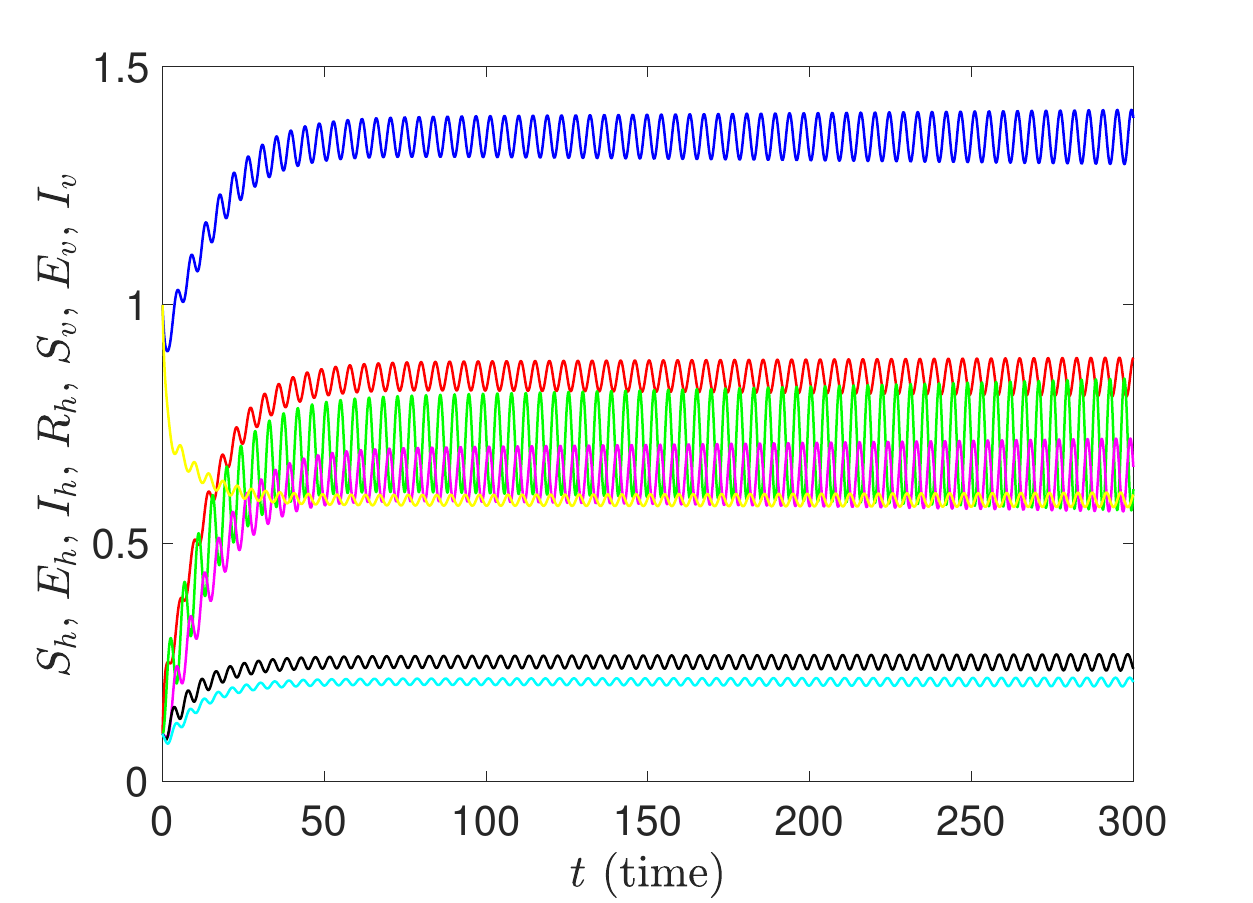} \\
\textbf{(d)}\\[6pt]
\end{tabular}
\begin{tabular}{cccc}
\includegraphics[scale=0.26]{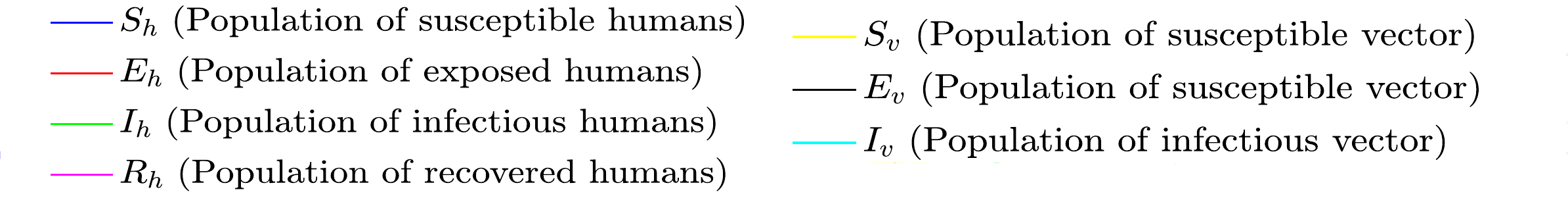}
\end{tabular}
\caption{Time evolution of each variable in the model \eqref{equ21}-\eqref{equ27}  in the absence of delay terms, e.g.  $\tau_r=\tau_h=\tau_v=0$ (a)  in the presence of average extrinsic and intrinsic incubation time with  $\tau_r=0$ and $\tau_h=\tau_v=5$ (b), in the presence of delay terms with $\tau_r=1.1$, $\tau_h=\tau_v=0.55$ (c) and $\tau_r=1.2$, $\tau_h=\tau_v=0.6$ (d). In the presence of all three delay terms it is assumed that $\tau_r=2 \tau_h=2 \tau_v$. }
\label{Fig:TE1}
\end{figure}


\noindent Figure \ref{Fig:TE2} shows the system dynamics in the absence of intrinsic and extrinsic incubation time ($\tau_h=\tau_v=0$) and it is assumed that the infectious individuals start to recover after some delay $\tau_r \neq 0$. If the recovery delay is lower than some threshold,  all variables saturates to their steady states with damping oscillations. Then a Hopf bifurcation  at a critical value of delay, e.g. $\tau_r^c \approx 1.4472$,  is encountered leading to a limit cycle surrounding the positive coexistence state. In Fig.~\ref{Fig:TE2}(b) dynamics with small amplitude oscillations is shown with $\tau_r=1.7$.  Further away from the Hopf point, e.g. $\tau_r=2.2$, dynamics for all variables exhibits sustainable periodic oscillations and the size of the limit cycle is substantially increased. If there is an excessive increase in the recovery delay with $\tau_r=2.4$, periodic oscillations with a stable limit cycle translate into chaos after some time, see Fig.~\ref{Fig:TE2}(d).

\begin{figure}[ht!]
\centering
\begin{tabular}{cccc}
\includegraphics[scale=0.28]{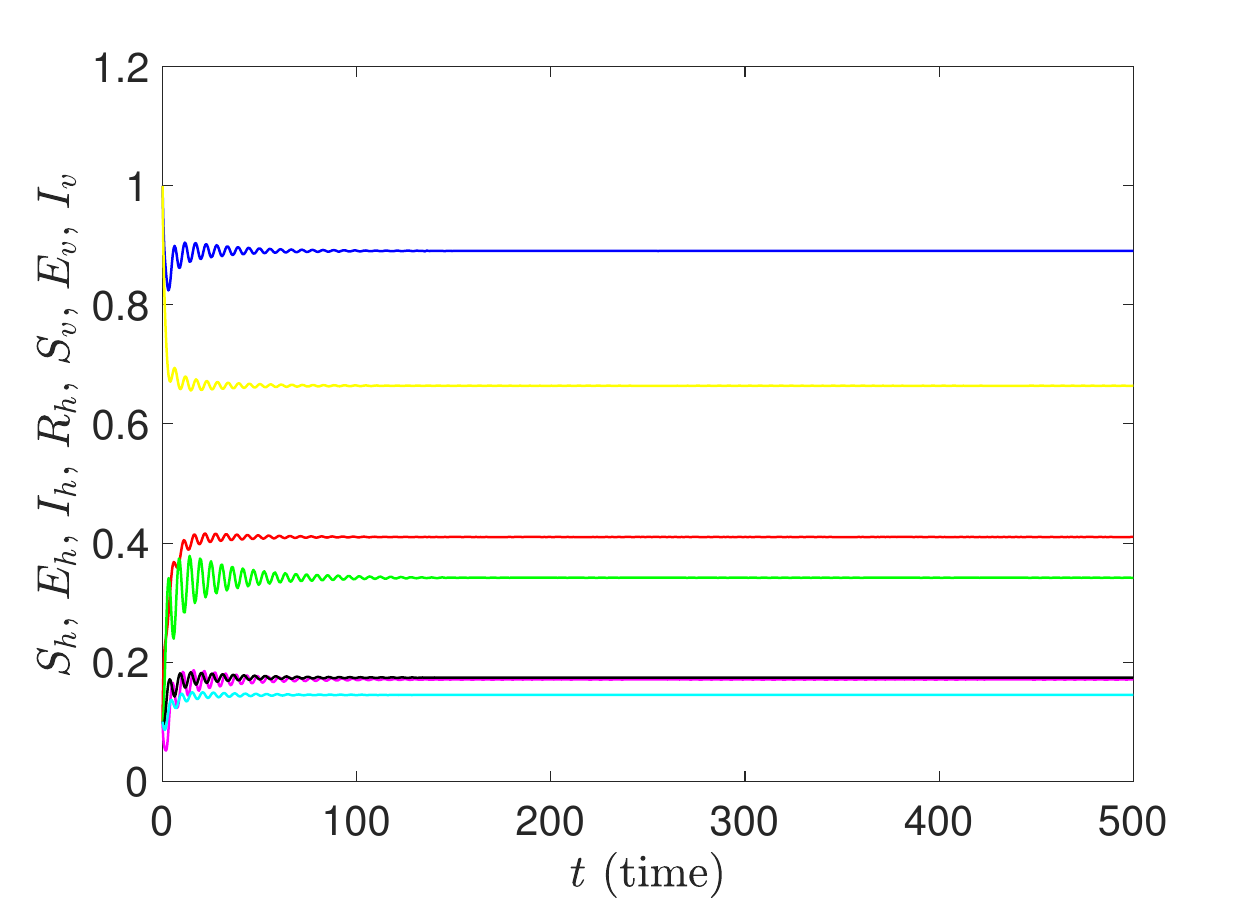} \\
\textbf{(a)}  \\[6pt]
\end{tabular}
\hspace{-1.03cm}
\vspace{-0.3cm}
\begin{tabular}{cccc}
\includegraphics[scale=0.28]{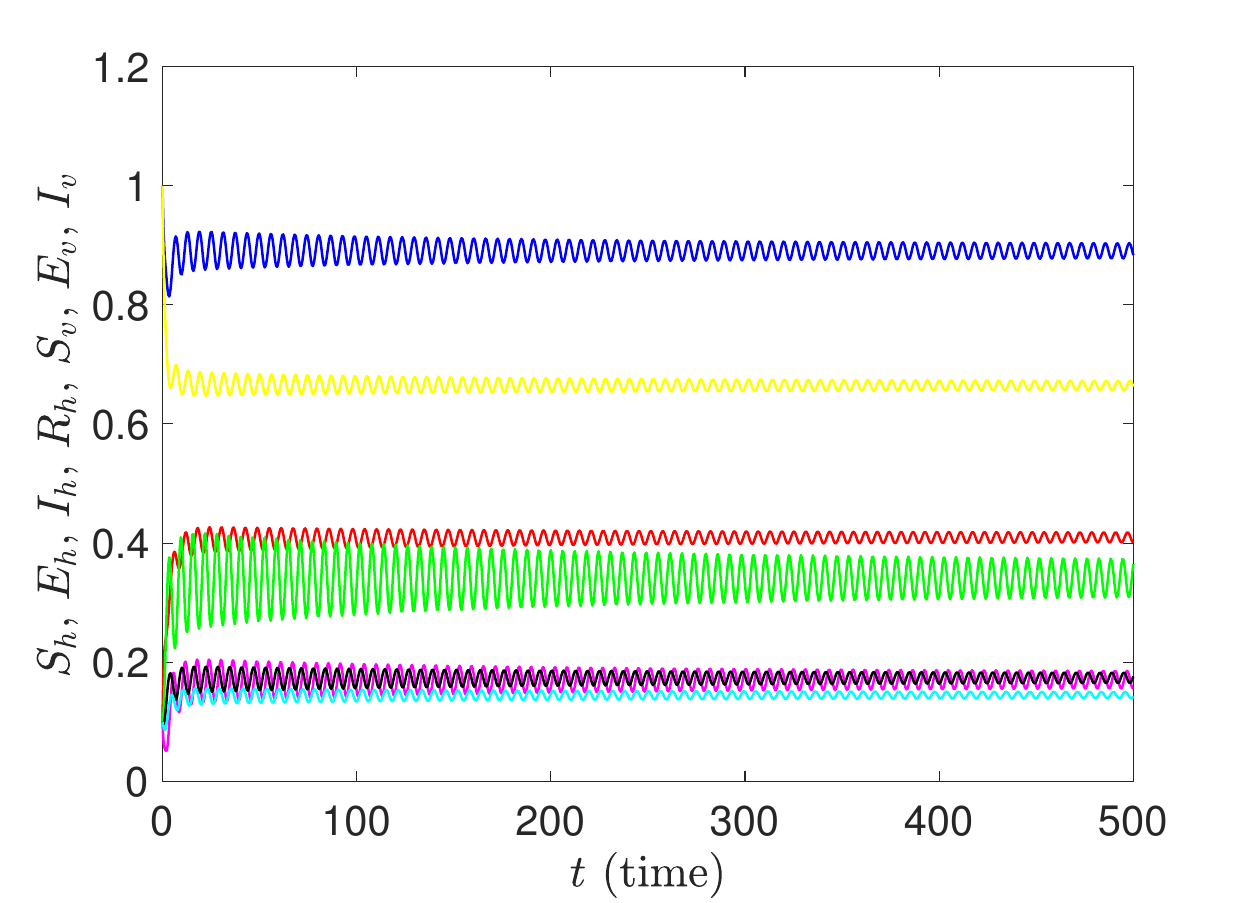} \\
\textbf{(b)}\\[6pt]
\end{tabular}
\begin{tabular}{cccc}
\includegraphics[scale=0.28]{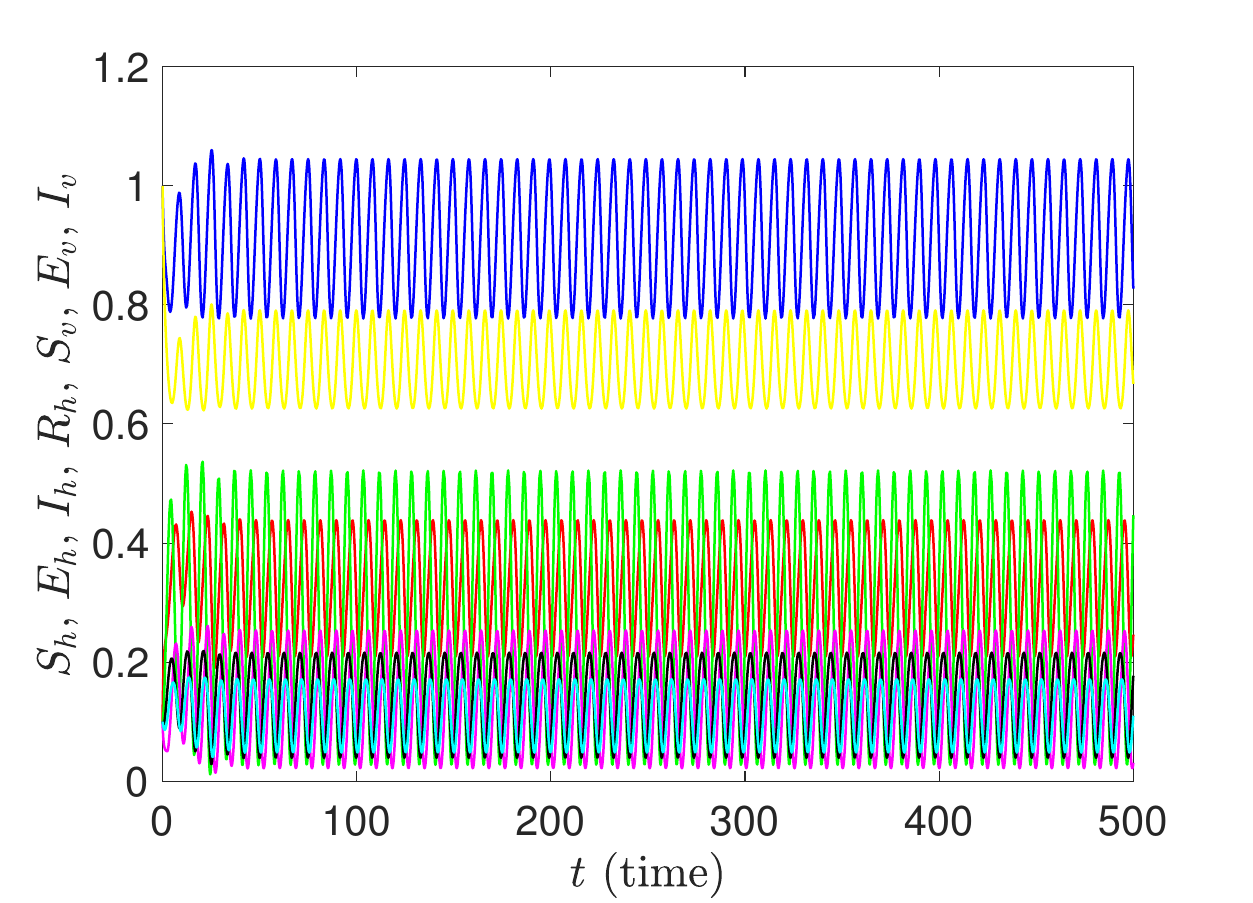} \\
\textbf{(c)}\\[6pt]
\end{tabular}
\hspace{-1.03cm}
\vspace{-0.2cm}
\begin{tabular}{cccc}
\includegraphics[scale=0.28]{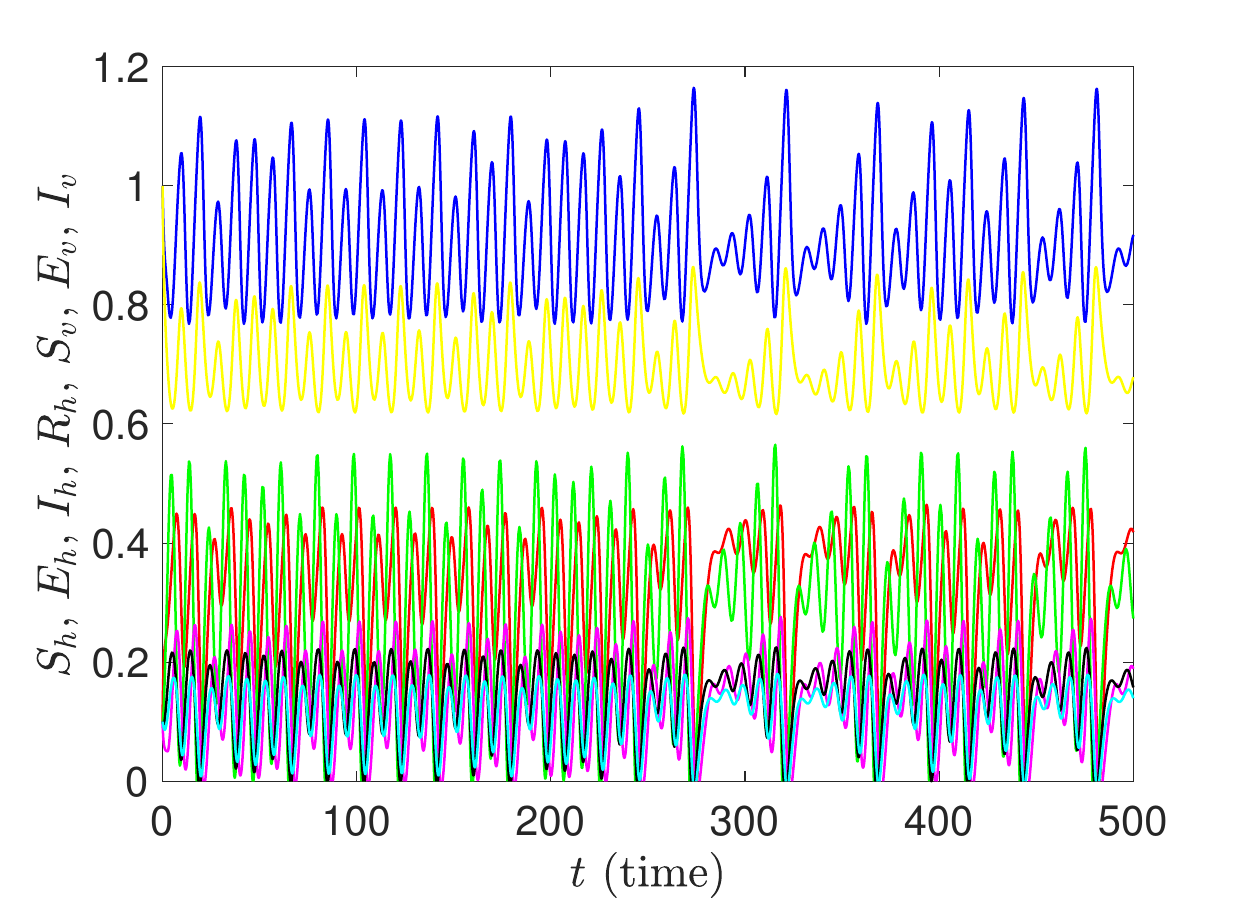} \\
\textbf{(d)}\\[6pt]
\end{tabular}
\caption{Time evolution of the system given by \eqref{equ21}-\eqref{equ27} in the presence of the delay in recovery of the individuals with   $\tau_h=\tau_v=0$ and $\tau_r=1.4$ (a), $\tau_r=1.7$ (b), $\tau_r=2.2$ (c) and $\tau_r=2.4$ (d). The initial conditions are set to $(1,0.1,0.1,0.1,1,0.1,0.1)$ and other parameters are given in the text. }
\label{Fig:TE2}
\end{figure}

\noindent In Fig.~\ref{Fig:Bif1}, stability of the susceptible human population along the equilibrium branches is shown with respect to  $\alpha_1$ and $b_{1h}$ for a non-delayed case. The number of unstable eigenvalues, denoted as $n_e$, is associated with blue ($n_e=0$) and red colors ($n_e=1$). Here only transcritical bifurcation where endemic equilibrium and disease free equilibrium intersect is detected. Bifurcation of the other components with respect to other parameters can be similarly performed, yet similar dynamics is obtained.

\begin{figure}[ht!]
\centering
\begin{tabular}{cccc}
\includegraphics[scale=0.28]{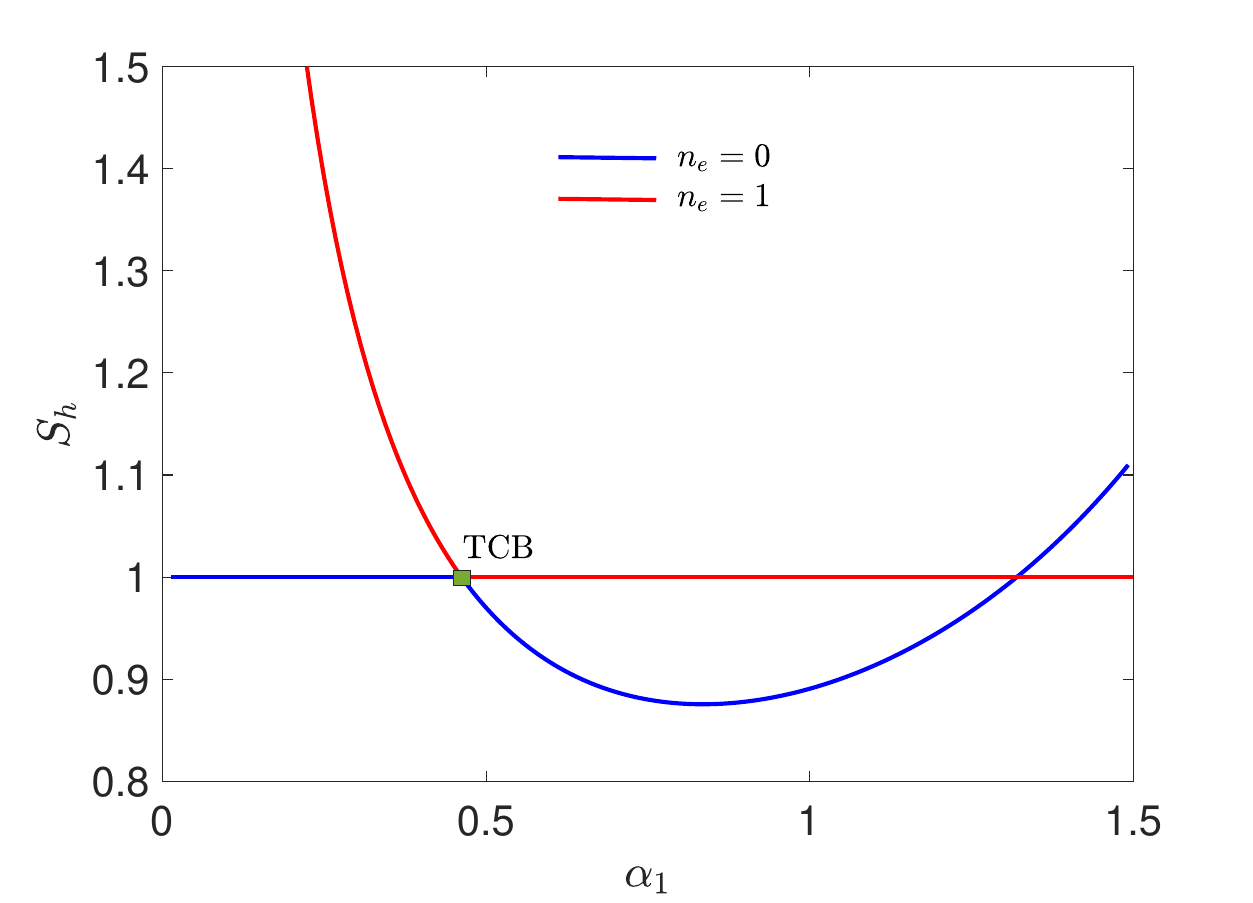} \\
\textbf{(a)}  \\[6pt]
\end{tabular}
\hspace{-1.05cm}
\vspace{-0.3cm}
\begin{tabular}{cccc}
\includegraphics[scale=0.28]{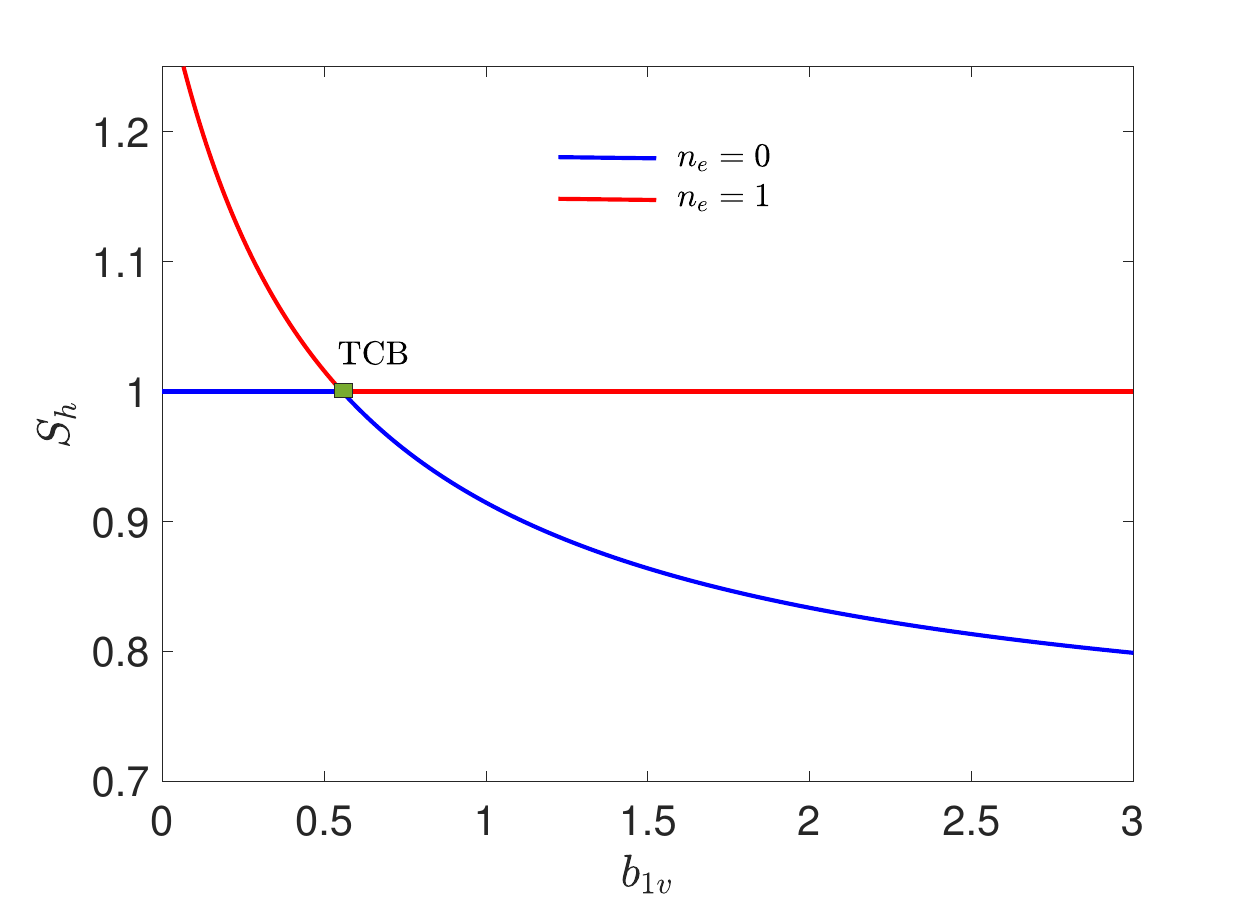} \\
\textbf{(b)}\\[6pt]
\end{tabular}
\caption{Bifurcation diagram of the variable for susceptible human population with respect to two parameters: $\alpha_1$ (a) and $\beta_{1h}$ (b) in the absence of delay constants, e.g.   $\tau_h=\tau_v=\tau_r=0$. The green square represents a transcritical bifurcation where endemic equilibrium and disease free equilibrium intersect. The initial conditions are set to $(1,0.1,0.1,0.1,1,0.1,0.1)$ and other parameters are given in the text. }
\label{Fig:Bif1}
\end{figure}

\subsection{Case I : $\tau_r= 2 \tau$}
Let $\lambda (\tau) = i \mu$ be a root of Eq.$ \eqref{Equ:Chactaui}$. Substituting this  into  $\eqref{Equ:Chactaui}$, and considering the relation $e^-{ ix}$ = $\cos (x) - i \sin (x)$ gives the imaginary part as
\begin{equation}
\begin{aligned}
M_1(\mu) \cos 2 \mu  \tau -M_2(\mu) \sin 2 \mu  \tau = M_3(\mu)
\end{aligned}
\label{Equ:Imag1i}
\end{equation}
and real part as
\begin{equation}
\begin{aligned}
M_2(\mu) \cos 2 \mu  \tau +M_1(\mu) \sin 2 \mu  \tau = M_4(\mu)
\end{aligned}
\label{Equ:real1i}
\end{equation}
where
\begin{align*}
M_1(\mu) =&\xi_h a_1 \mu^5+k_1\mu^3+k_2 \mu ,\quad
M_2(\mu) = -\xi_h \mu^6 +\xi_h a_2 \mu^4+k_3 \mu^2 +k_4 ,\\
M_3(\mu) =& \mu^7-a_2 \mu^5+a_4 \mu^3 -a_6 \mu ,\quad
M_4(\mu) = a_1 \mu^6 -a_3 \mu^4 +a_5 \mu^2.
\end{align*}
with $k_1=-xi_h a_3+d_1$, $k_2=\xi_h a_5 -d_3$, $k_3=-\xi_h a_4+d_2$ and $\xi_h a_6-d_4$ where
\begin{equation}
d_j = \begin{bmatrix}
m \alpha_1 \omega \\ \alpha_1 \alpha_2 f h
\end{bmatrix} \cdot \begin{bmatrix}
b_{j-1} \\ c_{j-1}
\end{bmatrix}, \quad j=1,2,3,4,
\end{equation}
for which $b_0=c_0=1$.
With a simple calculation, it can be shown from equations \eqref{Equ:Imag1i}-\eqref{Equ:real1i} that $M_1^2+M_2^2=M_3^2+M_4^2$, leading to a fourteenth   order equation:
\begin{equation}
\mu^{14} + B_1 \mu^{12} + B_2 \mu^{10} + B_3 \mu^8 + B_4 \mu^6+ B_5 \mu^4 + B_6 \mu^2 + B_7 = 0,
\label{Equ:mu12i}
\end{equation}
where
\begin{align*}
B_1= & a_1^2-2 a_2-\xi_h^2,\\
B_2 = &  - a_1^2 \xi_h^2 - 2 a_3 a_1 + a_2^2 + 2 a_2 \xi_h^2 + 2 a_4,\\
B_3 = &  2 a_1 a_5 - 2 a_6 - 2 a_2 a_4 + 2 k_3 \xi_h + a_3^2 - a_2^2 \xi_h^2 - 2 a_1 k_1 \xi_h,\\
B_4 = &  a_4^2 - k_1^2 + 2 a_2 a_6 - 2 a_3 a_5 + 2 k_4 \xi_h - 2 a_1 k_2 \xi_h - 2 a_2 k_3 \xi_h,\\
B_5 =& a_5^2 - k_3^2 - 2 a_4 a_6 - 2 k_1 k_2 - 2 a_2 k_4 \xi_h,\\
B_6 = & a_6^2 - k_2^2 - 2 k_3 k_4,\\
B_7 = &  -k_4^2.
\end{align*}
 Taking $v=\mu^2$, Eq.~\eqref{Equ:mu12i} equation can be rewritten as
\begin{equation}
v^7 + B_1 v^6 + B_2 v^5 + B_3 v^4 + B_4 v^3+ B_5 v^2 + B_6 v +B_7 = 0.
\label{Equ:v1i}
\end{equation}
The default parameter space given in Sec.~\ref{Sec:NonD} guarantee at least one positive root for Equ.~\eqref{Equ:v1i}. Assuming $\mu_c$ is the square root of positive root of \eqref{Equ:v1i} and using Eqs. \eqref{Equ:Imag1i}-\eqref{Equ:real1i} we find the critical threshold for time delay
\begin{equation}
\tau_{j}^c = \frac{1}{2 \mu_c} \cos^{-1}\left(\frac{M_1 M_3 + M_2 M_4}{M_1^2+M_2^2}\right)+\frac{ \pi j}{ \mu_c}, \quad j=1,2,\cdots,
\label{Equ:Critic}
\end{equation}
where $\mu_c$ is a positive root.  Differentiating both sides of Eq. \eqref{Equ:Psii1ii} with respect to $\tau$ yields
\begin{equation}
\begin{aligned}
\left.\textrm{Re} \left[\frac{\d \lambda }{\d \tau}\right]\right|_{\tau=\tau_c} = \frac{\Psi_E '(\nu_c)}{M_1^2+M_2^2},  \qquad \nu_c= \mu_c^2.
\end{aligned}
\end{equation}
Thus if $\Psi_E(\mu_c^2) \neq 0$ holds, then $\textrm{Re}\left[\left(\frac{\d \lambda }{\d \tau}\right)^{-1}\right] \neq 0$ for the existence of  Hopf bifurcation theorem for a system with time delay \cite{hassard1981theory}.

\noindent In Fig.~\ref{Fig:Bif2}, the effect of time delays  is investigated using a numerical continuation and the stability analyses of both endemic and disease free states are determined under parameter variation. For example, we study the case where the dynamics is determined  parameters $\alpha_1$, $b_{1h}$, $b_{1v}$ and $c_{1v}$ are varied with $\tau_h=\tau_v=\tau=0.62$ and $\tau_r=1.24$. In this case, the system may exhibits Hopf bifurcation (purple square) in addition to transcritical bifurcation (green square). The blue line corresponds to stable state, the red dashed line corresponds to unstable  periodic orbits emanating from Hopf bifurcations with $(n_e=1)$, yellow  and gray lines stand for the unstable branches, for which the number of eigenvalues with positive real parts is respectively $n_e=2$ and $n_e=3$.
In all cases, instability occurs through Hopf points where unstable equilibrium is surrounded by unstable periodic orbits. Note that there many be countably infinitely many roots of the characteristic equation  Eq.~ \eqref{Equ:Mat} of the system \eqref{equ21}-\eqref{equ27} at the endemic equilibrium $\Sigma_1^*=(0.8906, 0.4102, 0.3418, 0.1709, 0.6642, 0.1740, 0.1450)$. However, it may be enough to consider the eigenvalues sufficiently close to imaginary axis. Therefore, the  minimal real parts of the eigenvalues are set to $\textrm{real}(\lambda)=-3$ for stability.

\begin{figure}[ht!]
\centering
\begin{tabular}{cccc}
\includegraphics[scale=0.28]{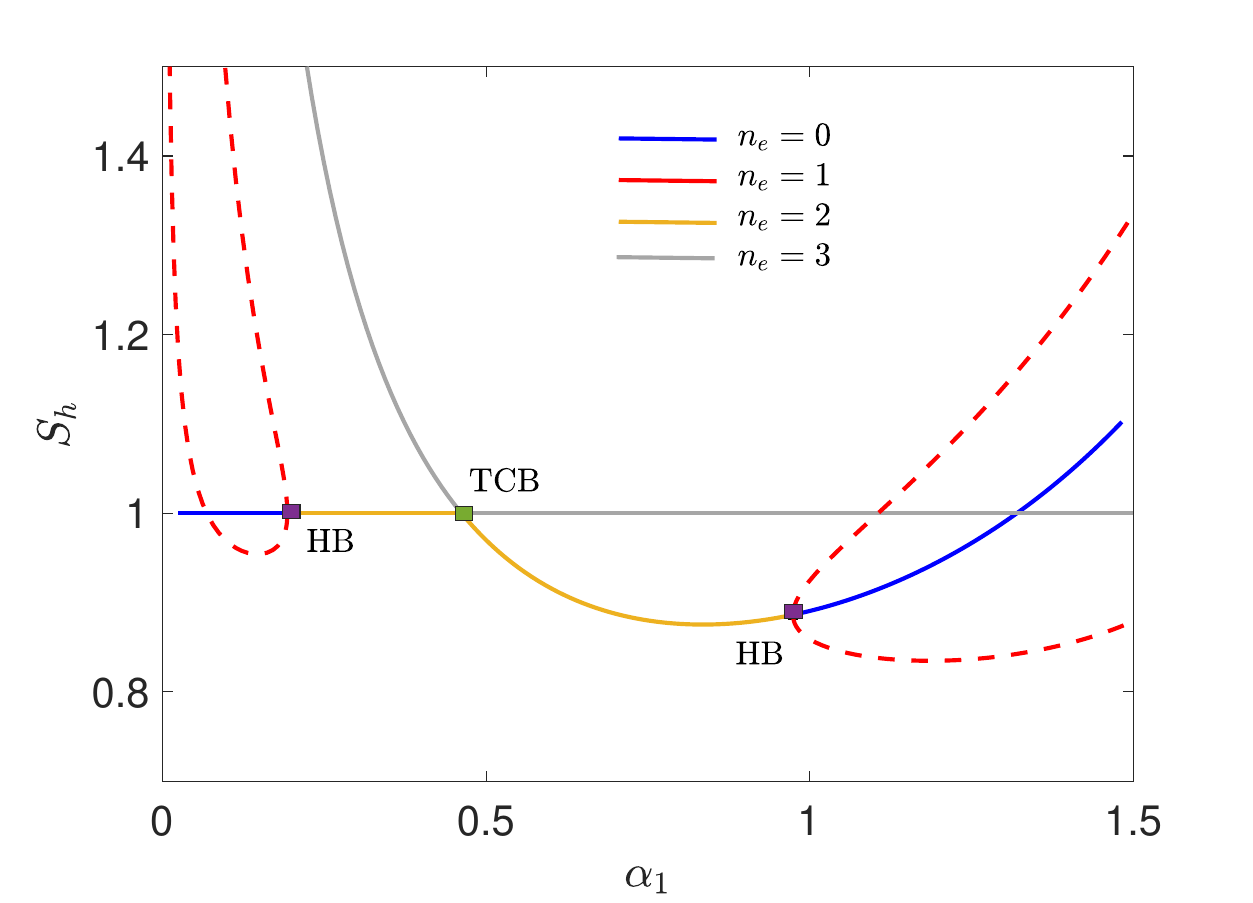} \\
\textbf{(a)}  \\[6pt]
\end{tabular}
\hspace{-1.05cm}
\vspace{-0.3cm}
\begin{tabular}{cccc}
\includegraphics[scale=0.28]{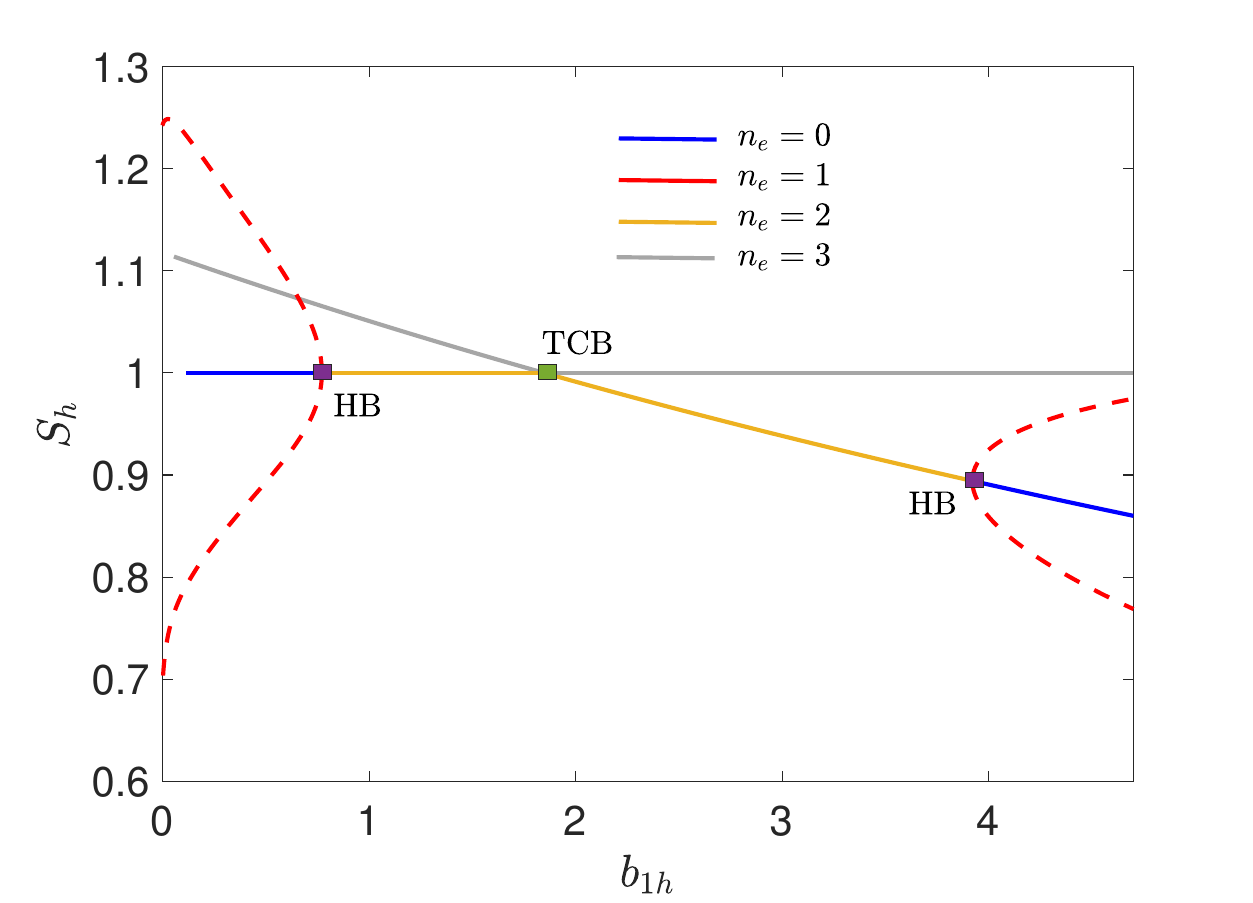} \\
\textbf{(b)}\\[6pt]
\end{tabular}
\begin{tabular}{cccc}
\includegraphics[scale=0.28]{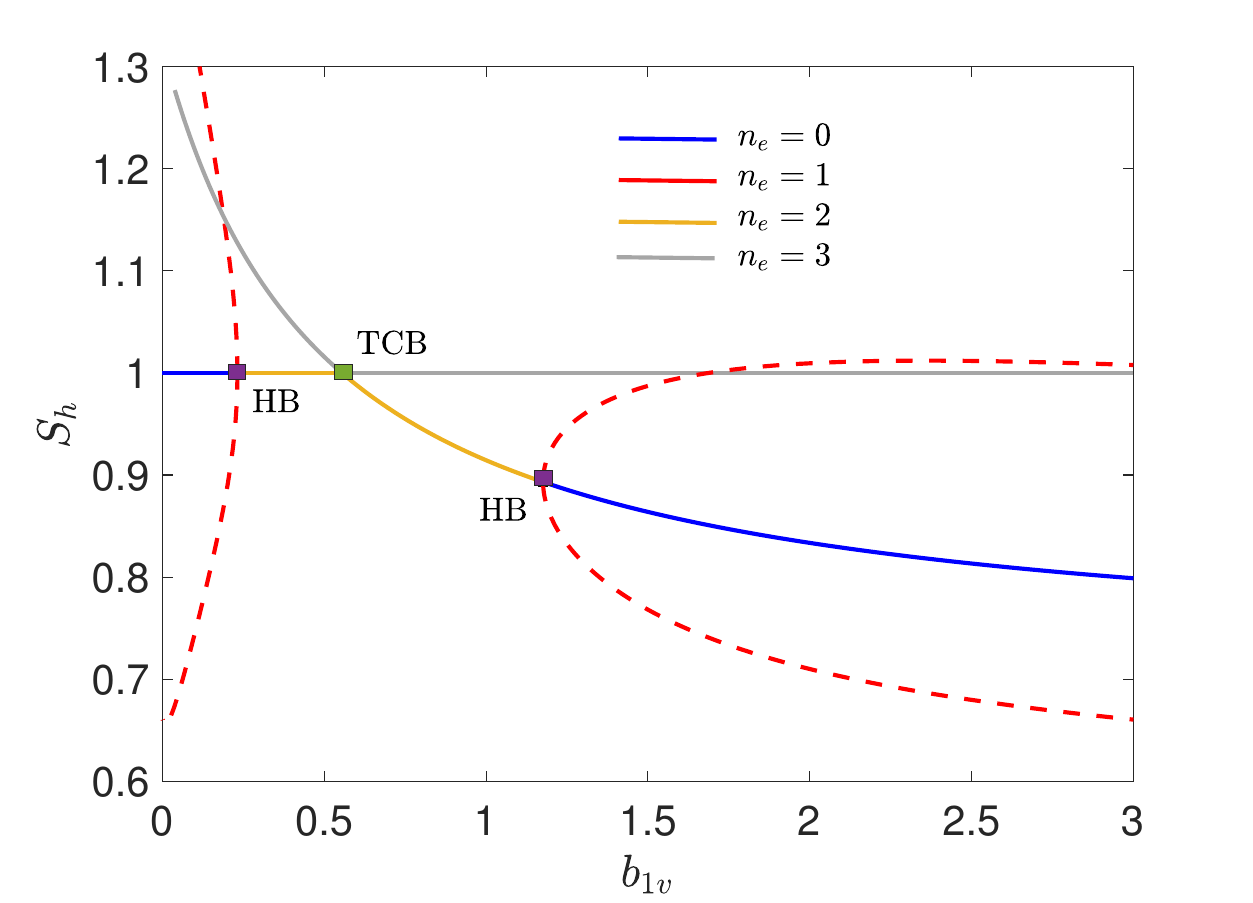} \\
\textbf{(c)}\\[6pt]
\end{tabular}
\hspace{-1.05cm}
\begin{tabular}{cccc}
\includegraphics[scale=0.28]{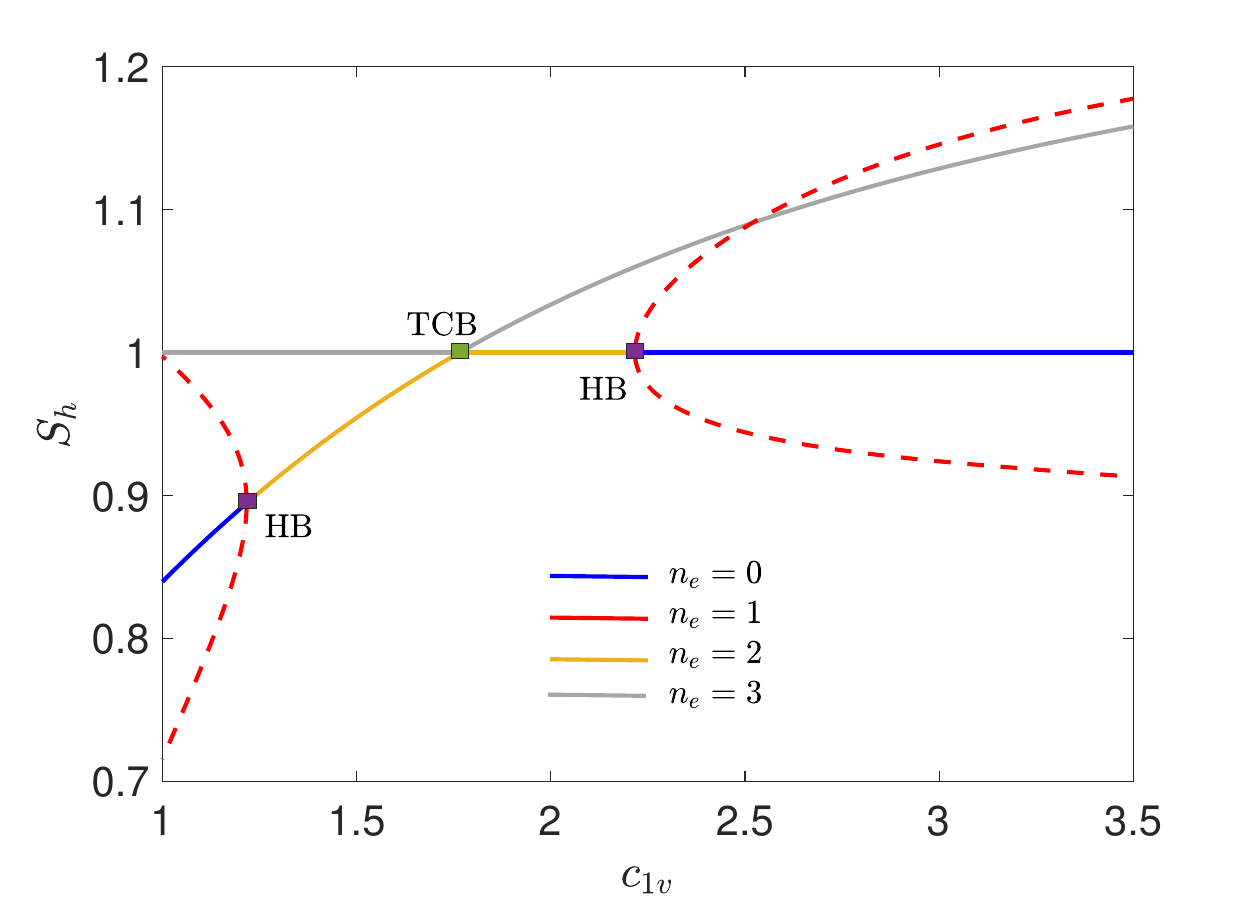} \\
\textbf{(d)}\\[6pt]
\end{tabular}
\caption{Single parameter numerical continuation of susceptiple human population with respect to parameters $\alpha_1$ (a), $b_{1h}$ (b), $b_{1v}$ (c) and $c_{1v}$ (d) with $\tau_h=\tau_v=\tau=0.62$ and $\tau_r=1.24$. The colored lines represent the number of  eigenvalues with positive real parts. The dashed lines stand for the branches emanating from Hopf bifurcation and represent the maximum of the periodic orbits. Purple and green squares respectively represent Hopf and transcritical bifurcations. }
\label{Fig:Bif2}
\end{figure}

\subsection{Case II : $\tau_r \neq 0$ and $\tau=0$}

Now it is assumed that there is no average extrinsic and intrinsic incubation time, e.g.
$\tau_h=\tau_v=\tau=0 $ and   the infectious people start  recovering after a period of time $\tau_r$. Hence the characteristic equation given by \eqref{Equ:Chactaui} becomes
\begin{equation*}
\begin{aligned}
&\lambda \left[\lambda^6+a_1 \lambda^5 + a_2 \lambda^4 +a_3 \lambda^3+a_4 \lambda^2+a_5 \lambda +a_6\right]-\xi_h \e^{-\lambda \tau_r} \left[\lambda^6+a_1 \lambda^5 \right. \\& + \left. a_2 \lambda^4 +a_3 \lambda^3+a_4 \lambda^2 +a_5 \lambda + a_6\right]-m \alpha_1 \omega \e^{-\lambda \tau_r} (\lambda^3+b_1 \lambda^2+b_2 \lambda +b_3)\\&
-\alpha_1 \alpha_2 f h (\lambda^3+c_1 \lambda^2 +c_2 \lambda +c_3)=0.
\end{aligned}
\end{equation*}
Similarly, substituting $\lambda=i \mu $ leads to coefficients for real and imaginary parts given in \eqref{Equ:Imag1i} and \eqref{Equ:real1i} as
\begin{align*}
\hat{M}_1(\mu) &=\xi_h a_1 \mu^5+k_1\mu^3+k_2 \mu,\\
\hat{M}_2(\mu) &= -\xi_h \mu^6 +\xi_h a_2 \mu^4+k_3 \mu^2 +k_4 ,\\
\hat{M}_3(\mu) &= \mu^7-a_2 \mu^5+k_5 \mu^3 + k_6 \mu ,\\
\hat{M}_4 (\mu) &= a_1 \mu^6 -a_3 \mu^4 +k_7 \mu^2 +k_8,
\end{align*}
with $k_1=-\xi_h a_3+ m \alpha_1 \omega$, $k_2=\xi_h a_5 -m \alpha_1 \omega b_2$, $k_3=-\xi_h a_4+m \alpha_1 \omega b_1$ and $k_4=\xi_h a_6-m \alpha_1 \omega b_3$, $k_5=a_4-\alpha_1 \alpha_2 f h$, $k_6=-a_6+\alpha_1 \alpha_2 f h c_2$, $k_7=a_5-\alpha_1 \alpha_2 f h c_1$, $k_8=\alpha_1 \alpha_2 f h c_3$. Besides, $\hat{M}_1^2+\hat{M}_2^2=\hat{M}_3^2+\hat{M}_4^2$ leads to a fourteenth order equation
\begin{equation}
\mu^{14} + \hat{B}_1 \mu^{12} + \hat{B}_2 \mu^{10} + \hat{B}_3 \mu^8 + \hat{B}_4 \mu^6+ \hat{B}_5 \mu^4 + \hat{B}_6 \mu^2 + \hat{B}_7  = 0,
\label{Equ:mu12ia}
\end{equation}
where
\begin{align*}
\hat{B}_1= &A_1,\\
\hat{B}_2 = &- a_1^2 \xi_h^2 - 2 a_3 a_1 + a_2^2 + 2 a_2 \xi_h^2 + 2*k_5,\\
\hat{B}_3 = &2*k_6 - 2 a_2 k_5 + 2 a_1 k_7 + 2 k_3 \xi_h + a_3^2 - a_2^2 \xi_h^2 - 2 a_1 k_1 \xi_h,\\
\hat{B}_4 = &- k_1^2 + k_5^2 - 2 a_2 k_6 + 2 a_1 k_8 - 2 a_3 k_7 + 2 k_4 \xi_h - 2 a_1 k_2 \xi_h - 2 a_2 k_3 \xi_h,\\
\hat{B}_5 =&  - k_3^2 + k_7^2 - 2 a_3 k_8 - 2 k_1 k_2 + 2 k_5 k_6 - 2 a_2 k_4 \xi_h,\\
\hat{B}_6 = &  - k_2^2 + k_6^2 - 2 k_3 k_4 + 2 k_7 k_8,\\
\hat{B}_7 = & k_8^2 - k_4^2.
\end{align*}
Following the ideas presented for Case I, we find the critical threshold for time delay as
\begin{equation}
\tau_{r_j}^c = \frac{1}{ \mu_c} \cos^{-1}\left(\frac{\hat{M}_1 \hat{M}_3 + \hat{M}_2 \hat{M}_4}{\hat{M}_1^2+\hat{M}_1^2}\right)+\frac{2 \pi j}{ \mu_c}, \quad j=1,2,\cdots,
\label{Equ:Critici}
\end{equation}
where $\mu_c$ is a positive root.

\noindent In Fig.~\ref{Fig:Bif3}, stability of the susceptible human population is shown with respect to the variations of parameters $\alpha_1$ (a), $b_{1h}$ (b), $b_{1v}$ (c) and $c_{1v}$ (d) in the absence of intrinsic and extrinsic incubation times ($\tau_h=\tau_v=0$) and in the presence of delay in the recovery period ($\tau_r=2.4$). As seen, period doubling bifurcations (cyan square), where stability of the dynamics is switched from stable to unstable,  arise in the periodic orbits emanating from Hopf bifurcations. Besides, interestingly the unstable branches emanating from Hopf point for disease free equilibrium exhibits spiral like trajectories.

\begin{figure}[ht!]
\centering
\begin{tabular}{cccc}
\includegraphics[scale=0.28]{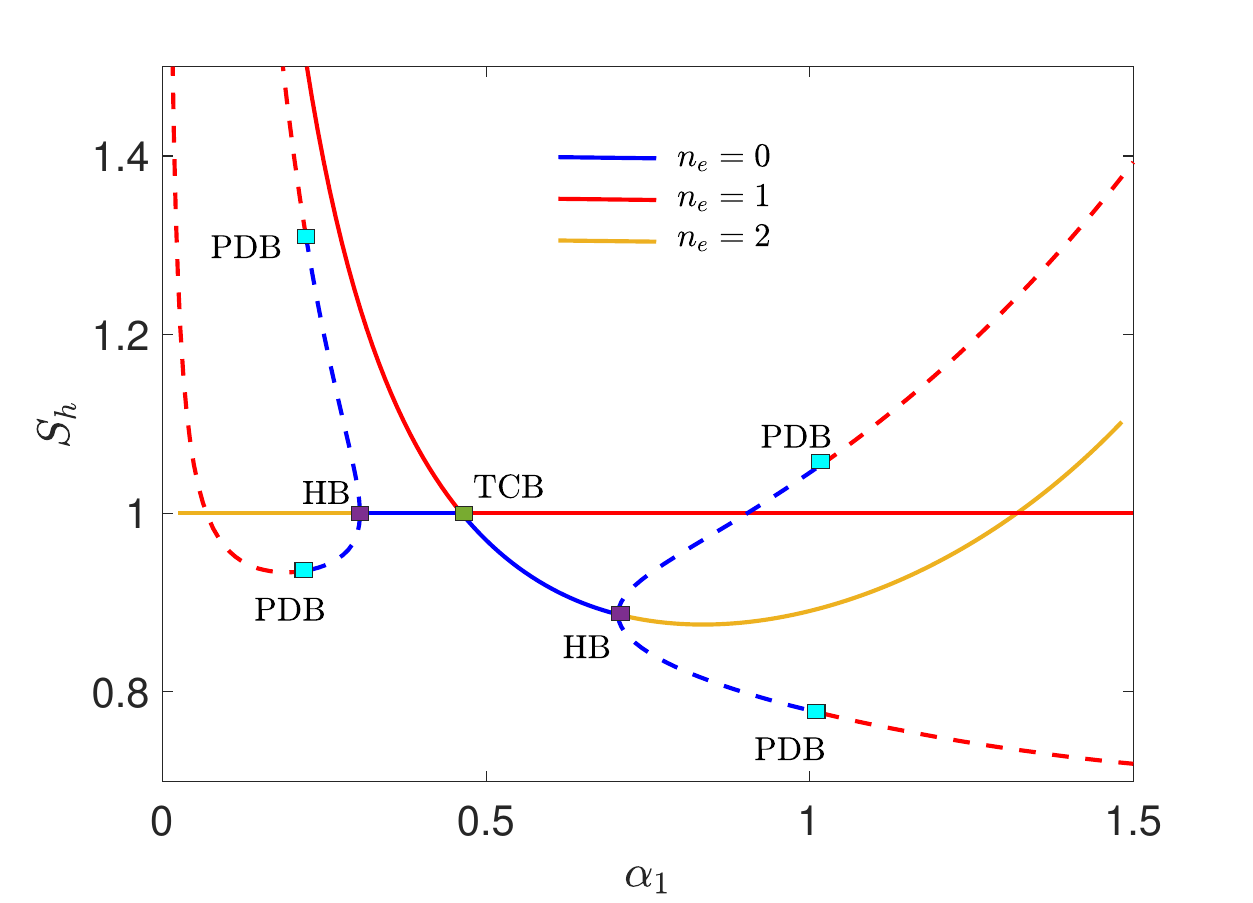} \\
\textbf{(a)}  \\[6pt]
\end{tabular}
\hspace{-1.05cm}
\vspace{-0.3cm}
\begin{tabular}{cccc}
\includegraphics[scale=0.28]{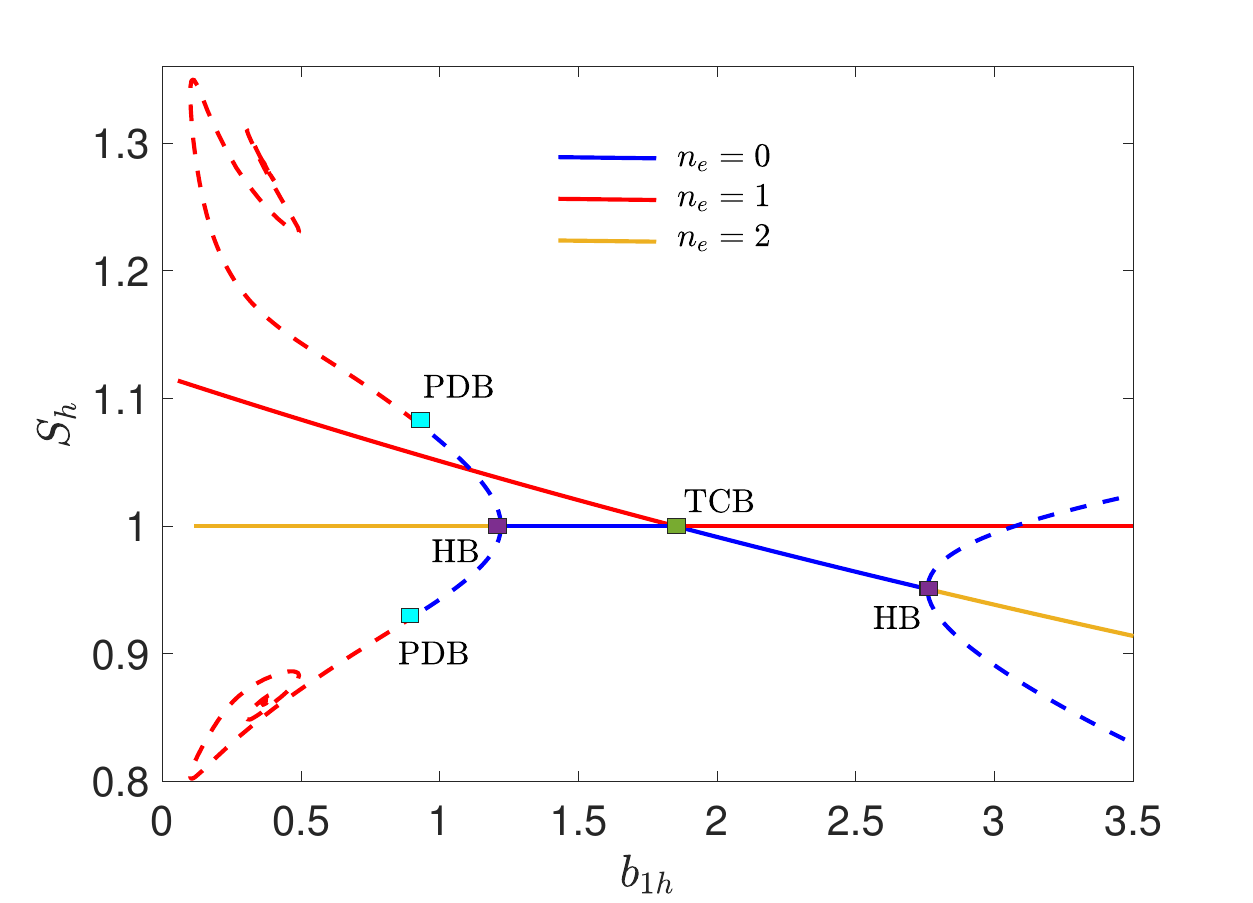} \\
\textbf{(b)}\\[6pt]
\end{tabular}
\begin{tabular}{cccc}
\includegraphics[scale=0.28]{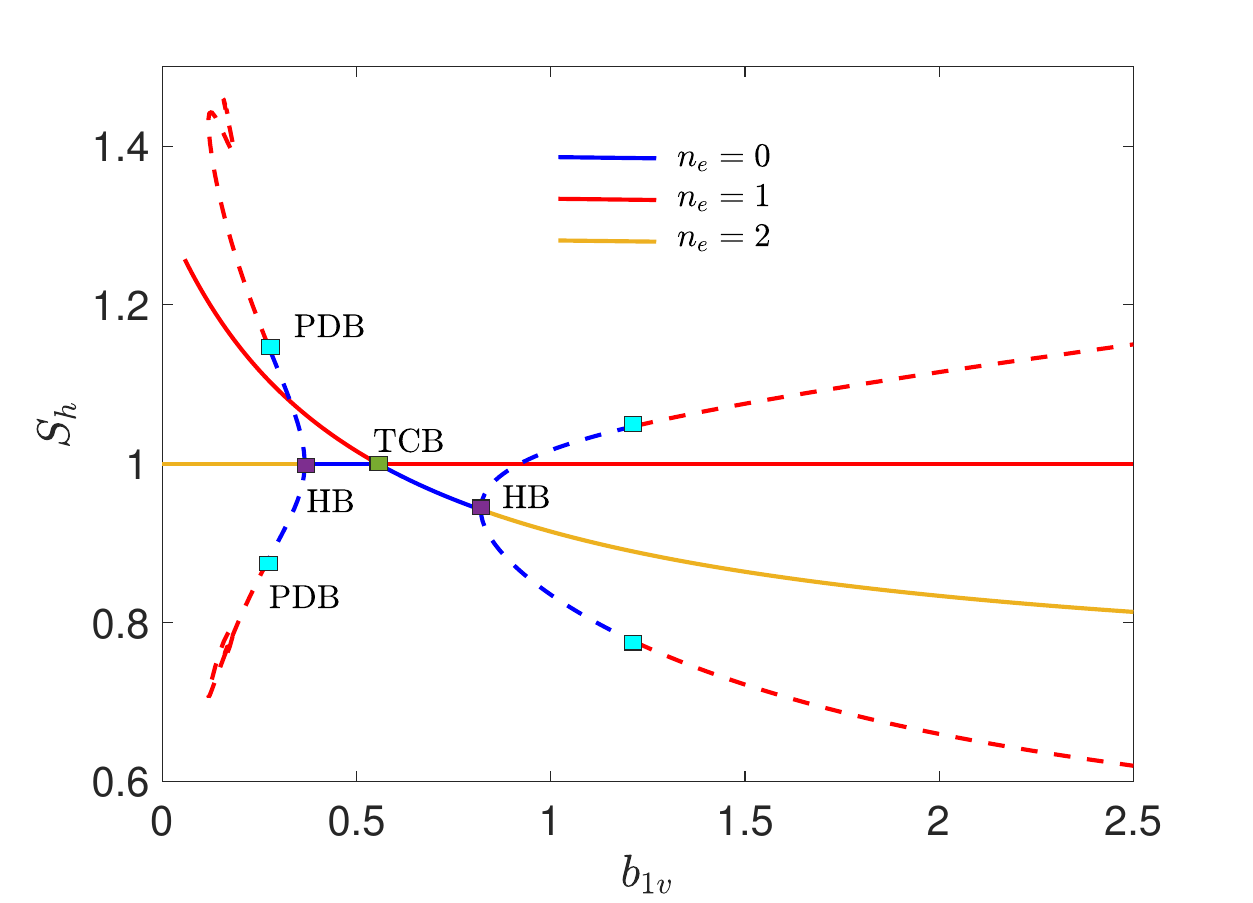} \\
\textbf{(c)}\\[6pt]
\end{tabular}
\hspace{-1.05cm}
\begin{tabular}{cccc}
\includegraphics[scale=0.28]{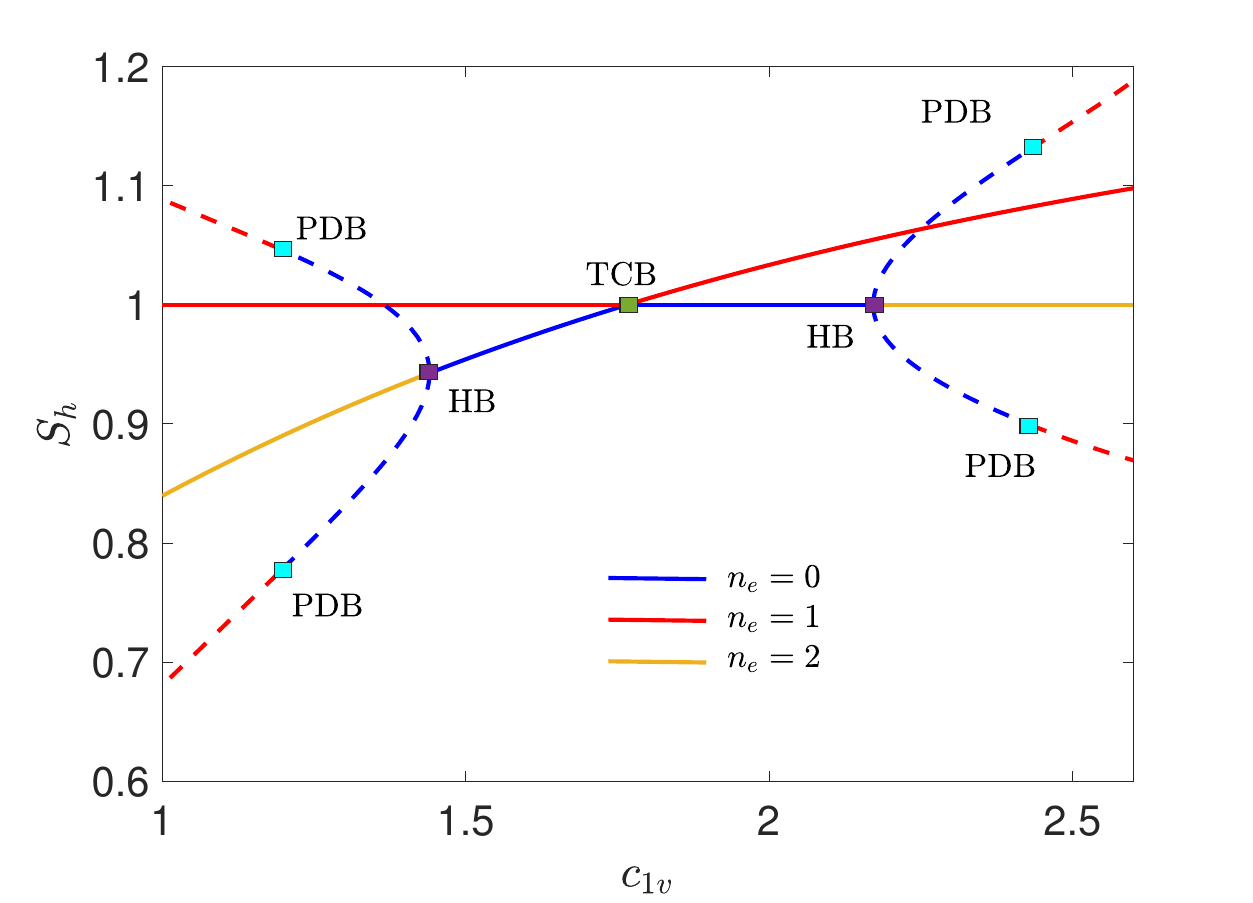} \\
\textbf{(d)}\\[6pt]
\end{tabular}
\caption{Single parameter numerical continuation of susceptible human population with respect to parameters $\alpha_1$ (a), $b_{1h}$ (b), $b_{1v}$ (c) and $c_{1v}$ (d) with $\tau_h=\tau_v=\tau=0$ and $\tau_r=2.4$. The colored lines represent the number of  eigenvalues with positive real parts. The dashed lines stand for the branches emanating from Hopf bifurcation and represent the maximum of the periodic orbits. Purple, cyan and green squares respectively represent Hopf, Period doubling and transcritical bifurcations. }
\label{Fig:Bif3}
\end{figure}

\subsection{Disease free equilibrium (DFE)}
Disease free equilibrium is found as $\Sigma_2^* = \left(1,0,0,0,\frac{1}{c_{1v}},0,0\right)$. Using the linearization argument the explicit form of the matrix given in Eq. \eqref{Equ:Mat} is given as follows
\[
J_\tau=\begin{pmatrix}
-1&0 & 0 & \omega & 0 & 0 &- f\e^{-\lambda \tau_h}\\
0& -\sigma_h& 0 &0&0&0&\alpha_1 f \e^{-\lambda \tau_h}\\
0 & 1 & -\xi_h \e^{-\lambda \tau_r} & 0&0&0&0\\
0&0& \e^{-\lambda \tau_r}& -\eta_h&0 &0&0\\
0&0&-h\e^{-\lambda \tau_v}& 0 & -c_{1v}&0&0\\
0&0& \alpha_2 h\e^{-\lambda \tau_v}&0&0&-c_{2v}&0\\
0&0&0&0&0&1&-c_{1v}
\end{pmatrix}.
\]
The calculations performed in Sec.~\ref{Sec:EndEqu}   are valid here with the exception of Eq. \eqref{Equ:b1hhi}. Since only susceptible human and susceptible vector population exist for disease free state, the coefficients in Eq. \eqref{Equ:b1hhi} becomes
\begin{equation}
d=0, \quad  f=b_{1h}, \quad h=\frac{b_{1v}}{c_{1v}}, \quad g=0.
\label{Equ:b1h0i}
\end{equation}
As an example let us concentrate on the case where $\tau_r \neq 0$ and $\tau_h=\tau_v=0$, leading to
\begin{equation}
\begin{aligned}
\Psi_{\Sigma_2^*}(\lambda) =  (\lambda+1)(\lambda+\eta_h)(\lambda+c_{1v}) \Omega_{\Sigma_2^*}(\lambda,0,0, \tau_r)=0 ,
\end{aligned}
\label{Equ:DFi}
\end{equation}
where
\begin{equation}
\Omega_{\Sigma_2^*}(\lambda,0,0, \tau_r)= (\lambda+c_{1v})(\lambda+c_{2v})(\lambda+\sigma_h)(\lambda+\xi_h \e^{-\lambda \tau_r})
- f h \alpha_1 \alpha_2 f h.
\label{Equ:DF2i}
\end{equation}
It is straightforward to see from Eq. \eqref{Equ:DFi} that three negative real eigenvalues arising from the disease-free system are $\lambda_1=-c_{1v}$, $\lambda_2=-\eta_h$ and $\lambda_3=-1$, and the other four eigenvalues can be determined depending on the delay parameter $\tau_r$, for which  Eq. \eqref{Equ:DF2i} can be rewritten as
\begin{equation}
\Omega_{\Sigma_2^*}(\lambda,0,0, \tau_r)=( \lambda^3 + d_1 \lambda^2 + d_2 \lambda + d_3 ) ( \lambda  + \xi \e^{- \lambda \tau_r}) -f h \alpha_1 \alpha_2=0,
\label{Equ:DF3i}
\end{equation}
where
\begin{align*}
d_1&= c_{1v}+c_{2v}+\sigma_h ,\\
d_2& = c_{1v} c_{2v} +c_{1v}\sigma_h + c_{2v} \sigma_h ,\\
d_3 &= c_{1v} c_{2v} \sigma_h.
\end{align*}
Let $i \mu_0$ is a solution of Eq. \eqref{Equ:DF3i}. Thus an eight order equation can be written as
\begin{equation}
\begin{aligned}
& \mu_0^8 +(d_1^2-2 d_2-\xi^2) \mu^6 + (-2 d_1 d_3+d_2^2+2\xi^2 d_2-\xi^2 d_1^2-2 f h \alpha_1 \alpha_2) \mu^4 \\& +(d_3^2-\xi^2 d_2^2+2 \xi^2 d_1 d_3+2 d_2 f h \alpha_1 \alpha_2)\mu^2+(f h \alpha_1 \alpha_2)^2-d_3^2 \xi^2=0.
\end{aligned}
\end{equation}
 Setting $\nu_0= \mu_0^2$, a fourth order polynomial is found as in the following
\begin{equation}
f(\nu_0) = \nu_0^4 + D_1 \nu_0^3 + D_2 \nu_0^2 + D_3 \nu_0 + D_4=0,
\label{Equ:fnui}
\end{equation}
with $D_1 = d_1^2-2 d_2-\xi^2$, $D_2 = -2 d_1 d_3+d_2^2+2\xi^2 d_2-\xi^2 d_1^2-2 f h \alpha_1 \alpha_2$, $D_3 = d_3^2-\xi^2 d_2^2+2 \xi^2 d_1 d_3+2 d_2 f h \alpha_1 \alpha_2$ and $D_4= (f h \alpha_1 \alpha_2)^2-d_3^2 \xi^2$, and therefore
\begin{equation}
f'(\nu_0) = 4 \nu_0^3 + 3 D_1 \nu_0^2 + 2 D_2 \nu_0 +D_3 =0.
\label{Equ:Fpii}
\end{equation}
Denoting $y_D = \nu_0+ \frac{3 D_1}{4}$, Eq. \eqref{Equ:Fpii} becomes
\begin{equation}
y_0^3 + e_1 y_0 + e_2=0,
\end{equation}
where
\[e_1=\frac{8 D_2- 3 D_1^2}{16} \quad \textrm{and} \quad e_2 = \frac{D_1^3-4 D_2 D_1+ 32 D_3}{32}.\]
The  solutions of Eq. \eqref{Equ:fnui} can be obtained based on the argument given in \cite{li2005zeros}:
\begin{align*}
{y_0}_1 &= \sqrt[3]{-\frac{e_2}{2}+\sqrt{\hat{\Delta}}} + \sqrt[3]{-\frac{e_2}{2}-\sqrt{\hat{\Delta}}},\\
{y_0}_2 &= \zeta \sqrt[3]{-\frac{e_2}{2}+\sqrt{\hat{\Delta}}} + \zeta^2 \sqrt[3]{-\frac{e_2}{2}-\sqrt{\hat{\Delta}}},\\
{y_0}_3 &= \zeta^2 \sqrt[3]{-\frac{e_2}{2}+\sqrt{\hat{\Delta}}} + \zeta \sqrt[3]{-\frac{e_2}{2}-\sqrt{\hat{\Delta}}},\\
\end{align*}
where \[ \hat{\Delta} = \left(\frac{e_1}{3}\right)^3 + \left(\frac{e_2}{2}\right)^2,\qquad \zeta=\frac{-1+\sqrt{3}i }{2},\quad \textrm{and}\quad {\nu_0}_j = {y_0}_j - \frac{3 D_1}{4}, \quad j=1,2,3.\]
\begin{lemma}
One has the following results for Eq. \eqref{Equ:fnui}:
\\
$\boldsymbol{C_1:}$ If $f h \alpha_1 \alpha_2 < d_3 \xi$ $(D_4<0)$  Eq. \eqref{Equ:fnui} has at least one positive root,\\
$\boldsymbol{C_2 :}$ If $f h \alpha_1 \alpha_2 \geq d_3 \xi$ and $\hat{\Delta}\geq 0$ then Eq. \eqref{Equ:fnui} has  positive roots if and only if ${\nu_0}_1>0 $ and $f({\nu_0}_1)<0$,\\
$\boldsymbol{C_3 :}$  If $f h \alpha_1 \alpha_2 \geq  d_3 \xi$ and $\hat{\Delta}< 0$ then Eq. \eqref{Equ:fnui} has  positive roots if and only if there exists at least one ${\nu_0}^*>0$ for which $f({\nu_0}^*)<0$. Here ${\nu_0}^* \in \{{\nu_0}_1, {\nu_0}_2, {\nu_0}_3 \}.$
\end{lemma}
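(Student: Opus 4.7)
The plan is to study the sign behaviour of $f(\nu_0)$ on $[0,\infty)$ using two key facts: $f(0) = D_4 = (fh\alpha_1\alpha_2)^2 - d_3^2 \xi^2$ and $f(\nu_0)\to+\infty$ as $\nu_0 \to \infty$ (since the leading coefficient is $1$). The three cases are then distinguished by the sign of $D_4$ and by the number of real critical points of $f$ that the depressed cubic associated with $f'(\nu_0)=0$ produces.

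For case $C_1$, I would invoke the intermediate value theorem directly. Since $f h \alpha_1\alpha_2 < d_3 \xi$ forces $D_4 < 0$, one has $f(0)<0$ and $f(\nu_0)\to +\infty$, so $f$ must vanish at some $\nu_0>0$. This case is immediate and requires no discriminant analysis.

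For cases $C_2$ and $C_3$ the hypothesis $f h \alpha_1 \alpha_2 \ge d_3 \xi$ yields $D_4 \ge 0$, hence $f(0)\ge 0$ and $f(\nu_0)\to+\infty$. Consequently a positive root can exist only if $f$ dips strictly below zero at some positive point, and such a dip must occur at a positive local minimum, i.e.\ at a positive critical point of $f$. This reduces the problem to analysing the cubic $f'(\nu_0)=4\nu_0^3+3D_1\nu_0^2+2D_2\nu_0+D_3=0$, which by the substitution $\nu_0 = y_0 - \tfrac{D_1}{4}$ (so that $y_D = \nu_0+\tfrac{3D_1}{4}$ as introduced in the excerpt) reduces to the depressed cubic $y_0^3 + e_1 y_0 + e_2 = 0$ with discriminant $\hat{\Delta}=(e_1/3)^3+(e_2/2)^2$. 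Standard Cardano theory then gives: if $\hat{\Delta}\ge 0$ the depressed cubic has exactly one real root, and the only candidate critical point of $f$ on $\mathbb{R}$ is ${\nu_0}_1$; if $\hat{\Delta}<0$ the cubic has three distinct real roots, yielding three real critical points ${\nu_0}_1,{\nu_0}_2,{\nu_0}_3$.

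Combining these observations completes the argument. In case $C_2$ I would conclude that $f$ attains a value below zero for some $\nu_0>0$ if and only if the unique real critical point ${\nu_0}_1$ is itself positive and $f({\nu_0}_1)<0$, because any positive local minimum of $f$ must coincide with ${\nu_0}_1$. In case $C_3$ the same reasoning gives the ``if and only if'' statement with the alternation over the three real critical points: a positive root exists precisely when at least one ${\nu_0}^*\in\{{\nu_0}_1,{\nu_0}_2,{\nu_0}_3\}$ is positive with $f({\nu_0}^*)<0$. The main technical obstacle I anticipate is being careful that the direction of implication ``$f$ has a positive root $\Rightarrow$ one such critical point is positive with $f<0$'' really uses $f(0)\ge 0$ and $f(+\infty)=+\infty$ together with continuity (any positive root forces a subsequent return to $+\infty$, and hence a local minimum in between with non-positive value), rather than the slightly weaker condition $f(0)\ge 0$ alone; once this monotonicity argument is made precise the converse direction is again just the intermediate value theorem applied on either side of the selected critical point.
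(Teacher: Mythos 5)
The paper itself gives no proof of this lemma: it is stated as a direct import of the root-classification argument in \cite{li2005zeros}, with only the Cardano setup (the shift $y_D=\nu_0+\tfrac{3D_1}{4}$, the depressed cubic, and $\hat{\Delta}$) supplied beforehand. Your proposal fills that gap with exactly the argument that reference is based on -- $f(0)=D_4$, $f(+\infty)=+\infty$, the intermediate value theorem for $\boldsymbol{C_1}$, and reduction to the positive critical points of $f$ for $\boldsymbol{C_2}$, $\boldsymbol{C_3}$ -- so it is the ``same route,'' just actually written out. The argument is essentially sound. Two small points deserve care. First, when $\hat{\Delta}=0$ the depressed cubic does not have ``exactly one real root'': it has a simple real root ${y_0}_1$ and a repeated real root ${y_0}_2={y_0}_3$; your conclusion survives because $f'$ does not change sign at the double root, so the global minimum of $f$ on $\mathbb{R}$ is still attained at ${\nu_0}_1$, but you should say this rather than miscount the roots. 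Second, your own ``only if'' reasoning correctly yields a positive critical point with \emph{non-positive} value, which matches the $f({\nu_0}_1)\le 0$ (resp.\ $f({\nu_0}^*)\le 0$) form in which this lemma is usually stated; with the strict inequality $f({\nu_0}_1)<0$ as written in the statement, the ``only if'' direction fails in the degenerate case where the positive root is a point of tangency (e.g.\ a quartic of the form $(\nu_0-1)^2(\nu_0^2+1)$). That is an imprecision inherited from the lemma as stated rather than an error you introduced, but a complete proof should either weaken the inequality to $\le$ or explicitly exclude repeated roots.
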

If the condition $\boldsymbol{C_1}$ holds,  Eq. \eqref{Equ:fnui} has positive root $\nu_c$ such that Eq. \eqref{Equ:DF3i} has a pair of imaginary roots: $\pm i \mu_c = \pm i \sqrt{\nu_c}$. Besides using the similar argument given  in Eq. \eqref{Equ:Critici} where the components in Eq. \eqref{Equ:aii} are computed for $d=0$, $g=0$, $f=b_{1h}$ and $h=\frac{b_{1v}}{c_{1v}}$.
Furthermore, differentiating both side of Eq. \eqref{Equ:DFi} with respect to $\lambda$, it can be similarly shown that $\textrm{Re} \left[\frac{d \lambda}{d \tau}\right]^{-1} \neq 0.$

\begin{lemma}
When $d_1^2=2 d_2+\xi^2$ and $d_2^2 \xi^2=d_3^2+2 d_1 d_3 \xi^2 + 2 d_2 f h \alpha_1 \alpha_2$. Eq. \eqref{Equ:fnui} becomes
\begin{equation}
\mu_0^8+D_2  \mu_0^4+ D_4=0.
\label{Equ:Rfnui}
\end{equation}
Taking $x_D=v_0^2$ and $\Delta_{D} = D_2^2-4 D_4 $
 then we have the following
conclusions:
\begin{itemize}
\item When $D_2^2>4 D_4$, then Eq. \eqref{Equ:Rfnui} has two unequal real roots: $x_D^{(1,2)} = \frac{D_2\pm \sqrt{\Delta_D}}{2}$ then Eq. \eqref{Equ:fnui} has eight unequal real roots which are $\lambda_{1,2,3,4} =\pm \sqrt{x_D^{(1)}} $ and $\lambda_{5,6,7,8} =\pm \sqrt{x_D^{(2)}} $.
\item When $\Delta_D=0$, Eq. \eqref{Equ:Rfnui} has two equal real roots and when $\Delta_D<0$,  Eq. \eqref{Equ:Rfnui} has two unequal and conjugate imaginary roots $x_D^{(1,2)} = \frac{D_2\pm i \sqrt{\Delta_D}}{2}$ then Eq.\eqref{Equ:fnui} has eight unequal and conjugate imaginary roots.
\end{itemize}
\end{lemma}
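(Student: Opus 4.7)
The plan is to exploit the two standing hypotheses to make the quartic in $\nu_0$ appearing in Eq.~\eqref{Equ:fnui} collapse to a biquadratic form, and then to treat the resulting polynomial as a quadratic in $x_D$ via the substitution $x_D = \nu_0^2 = \mu_0^4$. First I would substitute the two hypotheses directly into the coefficient formulas listed immediately below Eq.~\eqref{Equ:fnui}. The first hypothesis, $d_1^2 = 2 d_2 + \xi^2$, is exactly tailored so that $D_1 = d_1^2 - 2 d_2 - \xi^2 = 0$, and the second hypothesis, $d_2^2 \xi^2 = d_3^2 + 2 d_1 d_3 \xi^2 + 2 d_2 f h \alpha_1 \alpha_2$, rearranges to the vanishing of $D_3 = d_3^2 - \xi^2 d_2^2 + 2 \xi^2 d_1 d_3 + 2 d_2 f h \alpha_1 \alpha_2$. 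Once these two coefficients vanish, the quartic Eq.~\eqref{Equ:fnui} reduces to $\nu_0^4 + D_2 \nu_0^2 + D_4 = 0$, which, upon recalling $\nu_0 = \mu_0^2$, is precisely Eq.~\eqref{Equ:Rfnui}.

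Second, I would set $x_D = \mu_0^4 = \nu_0^2$ and view Eq.~\eqref{Equ:Rfnui} as the quadratic $x_D^2 + D_2 x_D + D_4 = 0$. The quadratic formula yields $x_D^{(1,2)} = \frac{-D_2 \pm \sqrt{\Delta_D}}{2}$ with $\Delta_D = D_2^2 - 4 D_4$, which matches the form recorded in the statement up to a sign convention on $D_2$. The three cases in the lemma are then an immediate discriminant dichotomy: for $\Delta_D > 0$ the two roots $x_D^{(1,2)}$ are real and distinct, for $\Delta_D = 0$ they coincide in a single real double root, and for $\Delta_D < 0$ they form a complex-conjugate pair.

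Third, for each value of $x_D^{(i)}$ I would unwind the substitution to recover roots of the octic. When $x_D^{(i)} > 0$ we obtain real roots $\mu_0 = \pm (x_D^{(i)})^{1/4}$ together with the imaginary pair $\pm i (x_D^{(i)})^{1/4}$; when $x_D^{(i)}$ is negative or complex we obtain four complex values that inherit conjugate-pair structure from the reality of the coefficients of Eq.~\eqref{Equ:Rfnui}. Combining the contributions of $x_D^{(1)}$ and $x_D^{(2)}$ in each of the three cases then recovers the announced classification of the eight $\mu_0$-roots of Eq.~\eqref{Equ:Rfnui}.

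The content of the argument is algebraically light; the main obstacle is simply bookkeeping between the three levels $\mu_0$, $\nu_0 = \mu_0^2$, and $x_D = \nu_0^2$, and being careful about which equation (the quartic Eq.~\eqref{Equ:fnui} or the octic Eq.~\eqref{Equ:Rfnui}) is being counted at each step, so that a statement such as ``eight unequal roots'' is unambiguously referenced at the $\mu_0$ level. Once this bookkeeping is fixed, the three cases reduce to a direct application of the quadratic formula followed by taking the remaining pair of square roots, with no further analytical subtlety.
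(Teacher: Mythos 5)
The paper states this lemma without any accompanying proof, so there is no authors' argument to compare against; your proposal supplies the verification the text omits, and it is correct. The two hypotheses are precisely the conditions $D_1=0$ and $D_3=0$ for the coefficients listed below Eq.~\eqref{Equ:fnui}, after which the quartic in $\nu_0$ collapses to the biquadratic and the three cases are the standard discriminant trichotomy for the quadratic in $x_D=\nu_0^2$. You are also right to flag, rather than silently reproduce, two defects in the statement itself: the quadratic formula gives $x_D^{(1,2)}=\tfrac{-D_2\pm\sqrt{\Delta_D}}{2}$, not $\tfrac{D_2\pm\sqrt{\Delta_D}}{2}$ (and in the case $\Delta_D<0$ the radicand should be $-\Delta_D$); and the claim of ``eight unequal real roots $\pm\sqrt{x_D^{(i)}}$'' conflates the $\nu_0$ and $\mu_0$ levels --- the values $\pm\sqrt{x_D^{(i)}}$ are the four roots of the quartic in $\nu_0$, whereas the eight roots of the octic in $\mu_0$ are the fourth roots $\pm\bigl(x_D^{(i)}\bigr)^{1/4}$ and $\pm i\bigl(x_D^{(i)}\bigr)^{1/4}$, of which at most four are real even when both $x_D^{(i)}>0$. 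Your closing bookkeeping remark is therefore not a pedantic caveat but a needed correction to the lemma as printed; the only thing I would add is to state explicitly that positivity of $x_D^{(i)}$ (not merely reality) is what the subsequent Hopf analysis requires, since only a positive $\nu_0$ yields a real frequency $\mu_0$.
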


\section{Conclusion}
\label{Sec:Conc}

In this research, a theoretical dengue fever propagation analysis for the extrinsic and intrinsic incubation period is modeled via a system of delay differential equations. As our focus is on the dynamics, we perform non-dimensionalization for model equation \eqref{Equ:11_17}. Here, the dengue fever non-dimensionalized model exhibits existence of dual equilibrium points: the Dengue-free equilibrium and the Dengue-endemic equilibrium. The Dengue-free equilibrium in the absence of delay is locally and globally stable at $R_0\leq1$. Otherwise, the Dengue-endemic equilibrium exists for $R_0>1$. In this context, $R_0$ represents a threshold that determines whether the dengue disease is eradicated or persists in the population of interest.

\noindent Since time delays appear in infectious groups $I_v$ and $I_h$, the dengue disease model presented in system \eqref{Equ:11_17} relies on the system of solutions at earlier periods. In fact, like in many real-life applications, time lags occur in many disease models and thus give much more reasonable insights into population dynamics. As illustrated in the time simulations given in Section \ref{Sec:Stability}, compared to the cases where there is no delay, a substantial change in the dynamics is observed, particularly with the increase in delay term $\tau_r$, representing the time after that infectious individuals begin to recover. Then the results are extended to include single-parameter bifurcation diagrams of the system and the effect of delay parameters on the system dynamics is explicitly presented. In this context, only transcritical bifurcation (TCB) has been detected for the case without delays. However, more complicated dynamics with period-doubling bifurcation and Hopf bifurcation have been observed in the occurrence of time lags. In fact, several authors have performed complex numerical experiments for bifurcation analysis for various infectious disease models \cite{kooi2013bifurcation,stollenwerk2022seasonally,zhang2022bifurcations}. This may also play an important role in understanding different epidemic diseases. For instance, by studying period-doubling bifurcations, we can potentially identify parameter regions where complex dynamics, such as chaos, arise \cite{stollenwerk2022seasonally}.

\noindent The numerical simulations in this paper have been carried out via MATLAB's dde23 solver. The results for numerical continuation have been obtained through DDE-BIFTOOL software, which is a collection of MATLAB programs. In this regard, the number of unstable eigenvalues is associated with specific line types, as illustrated in Fig. \ref{Fig:Bif1}. Here, one should note that the countable indefinite eigenvalues number can be gotten via the characteristic equation exponential terms. Meanwhile, only the eigenvalues which are close sufficient to the imaginary axis matter for stability. Therefore, we only consider the eigenvalues which satisfy Re$(\mu)>-3$.

    \noindent The ideas presented in this paper support the existence of more complex dynamics, e.g. periodic and non-periodic oscillations. Our analytical and numerical experiments have shown that delay term $\tau_r$ has an especially great impact on the model dynamics. Therefore, further analysis in terms of the biological and analytical meaning of $\tau_r$ should be performed. Besides, the backward bifurcation phenomenon in infectious dispersion formulations is often studied for the purpose of epidemic control. However, as seen in this paper, some of these models also support other types of bifurcations, including transcritical, period doubling, and Hopf bifurcations, under certain conditions. Therefore, it is essential to have insight into the role of these bifurcations from an epidemiological point of view. Further analysis on this topic is reserved for future papers. Dengue infection is an endemic health disease, commonly in the sub-tropical and tropical areas of the globe. Thus, it is significant to consider the possible re-occurrence of the outbreak by examining various strategies of optimal control and effective cost analysis.

\bibliographystyle{ieeetr}
\bibliography{dengue.bst}

\footnotetext{{$^*$}(Corresponding author)}
\end{document}